\newtheorem{theorem}{Theorem}
\newtheorem{lemma}{Lemma}
\newtheorem{proposition}{Proposition}
\newcommand{\ket}[1]{\lvert #1 \rangle}
\newcommand{\bra}[1]{\langle #1 \rvert}
\newcommand{\size}[1]{\lvert #1 \rvert}
\newcommand{\C}{\mathbb{C}}
\DeclareMathOperator{\poly}{\mathrm{poly}}
\newcommand{\floor}[1]{\lfloor #1 \rfloor}
\newcommand{\suppress}[1]{}
\newcommand{\Z}{{\mathbb Z}}
\newcommand{\subgstate}[2]{\Xi_{#1,#2}}
\begin{document}

\title{Zero sum subsequences and hidden subgroups}

\author{
Muhammad Imran
\\
Department of Algebra,
\\
Budapest University of Technology and Economics,
\\
Egry J\'ozsef u. 1,
H-1111 Budapest, Hungary.
\\
E-mail: \texttt{muh.imran716@gmail.com}
\and
G\'abor Ivanyos
\\
Computer and Automation Research Institute,
\\
E\"otv\"os Lor\'and Research Network, 
\\
Kende u. 13-17, H-1111 Budapest, Hungary.
\\
E-mail: \texttt{Gabor.Ivanyos@sztaki.hu}
}

\maketitle

\begin{abstract}
We propose a method for solving the hidden subgroup problem in nilpotent groups. The main idea is iteratively transforming the hidden subgroup to its images in the quotient groups by the members of a central series, eventually to its image in the commutative quotient of the original group; and then using an abelian hidden subgroup algorithm to determine this image. Knowing this image allows one to descend to a proper subgroup unless the hidden subgroup is the full group. The transformation relies on finding zero sum subsequences of sufficiently large sequences of vectors over finite prime fields. We present a new deterministic polynomial time algorithm for the latter problem in the case when the size of the field is constant. The consequence is a polynomial time exact quantum algorithm for the hidden subgroup problem in nilpotent  groups having constant nilpotency class and whose order only have prime factors also bounded by a constant.
\end{abstract}

\paragraph*{Keywords:} hidden subgroup problem, nilpotent group, zero sum subsequence, exact quantum algorithm.

\paragraph*{Acknowledgments.} 
The research of the second author was supported by the Hungarian 
Ministry 
of Innovation and Technology 
NRDI Office within the framework of the Artificial
Intelligence National Laboratory Program.

\section{Introduction}\label{sec:intro}
The standard version of the hidden subgroup problem (HSP for short) 
is the following. Given a function $f$ on the group $G\rightarrow \{0,1\}^r$ with the property that there is a subgroup $H$ such that $f(x)=f(y)$ if and only if $x$ and $y$ are in the same left coset of $H$, find the subgroup $H$. Perhaps Kitaev was the first who observed that Shor's factoring and discrete logarithm algorithms can be generalized to solve the HSP in finite abelian groups (and also in certain infinite commutative groups) in polynomial time. Much less is known about the complexity of the problem in non-commutative groups. The most general result is due to Ettinger, Hoyer and Knill. They showed in~\cite{EHK04} that the {\em query complexity} of the problem in finite not necessarily abelian groups is polynomial. Regarding the time complexity, Kuperberg's subexponential time quantum algorithm~\cite{Kuperberg} for the HSP in dihedral and very similar groups is perhaps the best known result. It has a remarkable extension by Alagic, Moore and Russell~\cite{AlaMooRus}to a special HSP in a class including non-solvable groups. There are some classes of groups in which the HSP can be solved in polynomial time. See the survey papers by Lomont~\cite{LomontSurvey} and by Wang~\cite{WangSurvey} for early results of this kind. The paper~\cite{LomKaufTool} by Lomonaco and Kauffman proposes interesting derivatives and generalizations of the Shor-Kiteav algorithm. The paper~\cite{HorKah} by Horan and Kahrobaei discusses cryptographic aspects of the HSP and reports also on more recent results. The hidden shift problem in abelian groups (and hence the HSP in the related semidirect product groups) appears to be quite popular in post-quantum cryptography, see, e.g.,~\cite{CasVdm} by Castryck and Vander Meeren and~\cite{AlaRus} by Alagic and Russell. In~\cite{BaeLee}, Bae and Lee propose a polynomial time solution to a continuous version of the hidden shift problem.

A quantum procedure is exact if it returns a correct output (after a final measurement) with probability one. Besides that exact quantum algorithms can be considered as counterparts of deterministic classical methods, their measurement-free versions can serve as ingredients of larger unitary procedures. The method of~\cite{EHK04} has an exact version, so it is natural to ask that in which classes of groups can the HSP be solved by an exact quantum algorithm in polynomial time. Brassard and Hoyer~\cite{BraHoy} presented a polynomial time exact method that works in $\Z_2^n$. In~\cite{CaiQiu}, Cai and Qiu proposed a simpler efficient exact method for Simon's problem (a special, though arguably the hardest instance of the HSP in $\Z_2^n$). Efficient exact algorithms with optimal query complexity for the HSP in $\Z_2^n$ appeared independently in~\cite{BonnetainSimon} by Bonnetain and in~\cite{WuQiuTanLiCai} by Wu et al. Mosca and Zalka 
in~\cite{MosZal} proposed an efficient exact solution of the discrete logarithm problem in cyclic groups of known order. An exact quantum algorithm for the HSP in $\Z_{m^k}^n$ for general $m$ was presented recently in~\cite{II22}, settling the case of abelian groups under the assumption that 
a multiple of the prime factors of the order of the group is known. 

\smallskip

In this paper we present an approach to solving the hidden subgroup problem in nilpotent groups that have nilpotency class $O(1)$. Our main result is a polynomial time exact quantum algorithm for the HSP in such groups only having prime factors also of size $O(1)$ in their order. We assume that the group $G$ is given as a black-box group with unique encoding. The main strategy of our algorithm is essentially a reduction to instances of the hidden subgroup problem in quotient groups of subgroups of $G$. We choose an input model suitable for such a reduction. 

In the standard version, the input is given by an oracle which is a unitary map computing $\ket{x}\ket{f(x)}$ from $\ket{x}\ket{0}$. The usual hidden subgroup algorithms start with computing the superposition $\frac{1}{\sqrt{\size{G}}}\sum_{x\in G}\ket{x}\ket{f(x)}$ using the oracle and most of them ignore the second register that holds the value of $f$ and work with the coset superpositions $\ket{xH}=\frac{1}{\sqrt{\size{H}}}\sum_{y\in H}\ket{xy}$ in the sequel, see e.g.,~\cite{LomKaufTool}. These methods, as noted in~\cite{Kuperberg}, remain applicable in the context where the oracle is assumed to generate copies of a mixture of the coset superpositions. This holds in particular in the case of the exact abelian hidden subgroup algorithm of~\cite{II22}.

Specifically, we consider the state $\frac{1}{\sqrt{\size{G}}}\sum_{x\in G}\ket{x}\ket{f(x)}$ as a purification of the mixed state $\subgstate{G}{H}=\frac{1}{\size{G:H}}\sum_{x\in X}\ket{xH}\bra{xH}$, where $X$ is any left transversal of $H$ in $G$, in order to have a unitary oracle. The state $\subgstate{G}{H}$ is referred to as a (hidden) subgroup state. We assume that our hidden subgroup $H$ is given by a unitary map (referred as oracle) that, on zero input, returns a copy of an {\em arbitrary} (though fixed) purification of the subgroup state $\subgstate{G}{H}$. 

It will be convenient to introduce a subtask of the HSP, namely computing the hidden subgroup modulo the commutator subgroup of $G$, that is, the subgroup $HG'$ where $H$ is the hidden subgroup. We use the shorthand HSMC for this problem. To illustrate the power of HSMC in nilpotent groups note that it naturally includes the commutative case of the HSP and that having computed the subgroup $HG'$ and it is a proper subgroup of $G$ then we can descend to it to compute $H$, while if $HG'=G$ then $H=G$ because in a nilpotent group every maximal subgroup contains the commutator. 

We give a high-level description of a strategy for solving the problem HSMC in a class of nilpotent groups. We call a group $G$ {\em semi-elementary} if $G$ is a $p$-group for some prime $p$ such that $G/G'$ is elementary abelian. In a semi-elementary group $G$, our strategy for computing the hidden subgroup modulo the commutator is based on iterating the following procedure. Assume that $L$ is an elementary abelian subgroup contained in the center of $G$. Then we create a copy of the subgroup state corresponding to $HL/L$ in the quotient group $G/L$ from sufficiently many copies of the subgroup state for $H$ in $G$. We refer to this procedure (as well as some simpler ones) as {\em subgroup state conversion}. This conversion is based on finding zero sum subsequences of sufficiently long sequences of elements of $L$. Eventually, in $c-1$ rounds of iteration, where $c$ is the nilpotency class of $G$, we compute a copy of the subgroup state corresponding to $HG'/G'$ in $G$. (The semi-elementary property ensures the existence of a standard central series of length $c$ with elementary abelian factors.) Finally, from sufficiently many copies of such subgroup states we compute $HG'/G'$ using the exact abelian hidden subgroup algorithm of \cite{II22}. Fortunately, semi-elementary groups occur as factor groups of subgroups of nilpotent groups frequently enough to make a reduction from the HSP to the special case of HSMC possible, see Proposition~{\ref{prop:HSP-to-HSMC}} for details. The main result we obtain is the following.

\begin{theorem}\label{thm:HSP}
Suppose that $G$ is a nilpotent group of class bounded by a constant and that the prime factors of $\size{G}$ are also bounded by a constant. We assume that $G$ is a black-box group with unique encoding of elements by $\ell$-bit strings. Then there is an exact quantum algorithm that solves the hidden subgroup problem in $G$ using $\poly(\ell)$ operations and $\poly(\log\size{G})$ calls to the subgroup state creating oracle and its inverse.
\end{theorem}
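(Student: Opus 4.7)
The plan is to chain three reductions. The first reduces the HSP in $G$ to the HSMC subtask (computing $HG'$) via Proposition~\ref{prop:HSP-to-HSMC} together with a recursive descent: once $HG'$ has been computed, either $HG' = G$, in which case $H = G$ because every maximal subgroup of a nilpotent group contains $G'$, or $HG' \subsetneq G$, in which case I restrict to the subgroup $HG'$ by producing subgroup states $\subgstate{HG'}{H}$ from $\subgstate{G}{H}$ and recurse inside $HG'$. Each descent step strictly shrinks the host group, so the recursion depth is $O(\log\size{G})$. Since a nilpotent group is the direct product of its Sylow subgroups, and the primes involved are $O(1)$, the ambient HSP also decomposes into finitely many $p$-group HSPs, each of which Proposition~\ref{prop:HSP-to-HSMC} reduces to a bounded number of HSMC instances in semi-elementary factor groups of subgroups of $G$.

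For HSMC in a semi-elementary $p$-group $G$ of nilpotency class $c$, I would follow the iterated subgroup state conversion strategy announced in the introduction. Fix a central series $G = G_0 \triangleright G_1 \triangleright \cdots \triangleright G_c = \{1\}$ whose consecutive factors are elementary abelian; this exists because $G$ is a nilpotent $p$-group and one may refine any central series to factors of order $p$. In round $i = 1,\ldots,c-1$, the procedure consumes polynomially many copies of $\subgstate{G/G_{i-1}}{HG_{i-1}/G_{i-1}}$ and, using the elementary abelian central subgroup $G_{i-1}/G_i$ of $G/G_i$, produces (fewer, but still polynomially many) copies of $\subgstate{G/G_i}{HG_i/G_i}$. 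After $c-1 = O(1)$ rounds I hold copies of the subgroup state in $G/G'$, which I feed to the exact abelian hidden subgroup algorithm of~\cite{II22} to obtain $HG'/G'$, and hence $HG'$.

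The crux, and the main obstacle, is the single-round subgroup state conversion. Given copies of a purified subgroup state in $G$ whose hidden subgroup meets an elementary abelian central $L \cong \F_p^d$ (viewed inside $G/G_i$), the task is to uncompute the $L$-coordinate exactly so as to obtain a pure state indexed by cosets of $HL/L$ in $G/L$. The approach is to extract, from several input copies, a sequence of vectors $v_1,\ldots,v_n \in L$; once $n$ exceeds the Davenport constant $d(p-1)+1$ of $\F_p^d$, which is $\poly(\log\size{G})$ since $p = O(1)$, the new deterministic polynomial-time zero-sum-subsequence algorithm returns a nonempty $S \subseteq \{1,\ldots,n\}$ with $\sum_{i \in S} v_i = 0$, and the corresponding group-word combination is used as a fixed unitary to coherently cancel the $L$-part. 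Exactness is critical at every stage: the zero-sum search must be deterministic so that the combining unitary has no probabilistic branching, and the base-case abelian HSP must itself be exact, which is precisely the role of~\cite{II22}. Totalling resources, each HSMC call performs $O(c) = O(1)$ rounds, each using $\poly(\log\size{G})$ oracle copies and $\poly(\ell)$ elementary operations; the $O(\log\size{G})$ descent levels give the claimed complexity, and exactness is preserved through composition since every subroutine is unitary and every oracle call is unitary.
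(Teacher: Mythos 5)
Your proposal is correct and follows essentially the same route as the paper: reduce the HSP to HSMC in semi-elementary quotients of subgroups (Proposition~\ref{prop:HSP-to-HSMC}), convert subgroup states down a central series with elementary abelian factors by running a deterministic zero-sum-subsequence subroutine coherently on the Fourier labels (Lemma~\ref{lem:main_convert}, Proposition~\ref{prop:mod_commutator}), and finish with the exact abelian algorithm of~\cite{II22}, with Theorem~\ref{thm:zsum} supplying the zero-sum algorithm for $p=O(1)$. One small correction: the number of vectors the deterministic algorithm requires is not the Davenport constant $1+d(p-1)$ but the larger ``effective'' bound $S(p,n)=p^{O(p\log^2 p)}n^{O(p\log p)}$ of Theorem~\ref{thm:zsum} --- still polynomial in $n$ for $p=O(1)$, so your resource count is unaffected.
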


\noindent
{\em Related results.}
In the exact setting,~\cite{II22} efficiently solves the abelian case without the restriction on the prime factors of $\size{G}$. There are quite a few related  non-exact polynomial-time algorithms. Among them, the result of \cite{FIMSS}, which solves the HSP in {\em solvable} groups that have derived series and exponent bounded by constants, is perhaps the closest to Theorem~\ref{thm:HSP}. This class of groups covers the groups for which our result is applicable, except those that have exponent divisible by large powers of small primes. Note however, that the case of these groups could be efficiently treated by a combination of the reduction of Proposition~\ref{prop:HSP-to-HSMC} with the algorithm of \cite{FIMSS}. We remark that the semidirect product group in which the HSP is equivalent to the hidden shift problem over $\Z_{2^k}^n$ is a nilpotent group of class $k$. Bonnetain and Naya-Plasencia~\cite{BonNaya} propose a non-exact method whose main ingredient can be considered as a combination of Kuperberg's sieve with 
finding zero sum subsequences in $\Z_2^n$ using linear algebra.

The case of nilpotency class at most two is efficiently treated by the non-exact method of~\cite{ISS12}, without any restriction on the size of the prime factors of $\size{G}$. It is worth mentioning that by technical content, \cite{ISS12} can be considered as the closest relative of the present paper. The idea of reducing the HSP to HSMC stems from there and many ingredients of the reduction appeared in that paper. Also, the key tool of~\cite{ISS12}, using several coset superpositions and the quantum Fourier transform of a central subgroup can be considered as some (though less transparent) form of subgroup state conversion. In the class two case, however, there is a more powerful tool to cancel out characters of the subgroup: one can also apply twists with certain nice automorphisms of the group that do not change the hidden subgroup too much. Unfortunately, such automorphisms do not exist in general nilpotent groups of class greater than two.

The methods of \cite{DHIS14,IS17} offer efficient solution to the HSP in certain nilpotent groups of higher class, again with potentially large prime factors in there orders. These groups have a normal subgroup with an abelian factor group of restricted kind (e.g., cyclic). These methods as well as that of \cite{FIMSS} are of highly non-exact nature. Probably, the technique of \cite{DHIS14} can be made exact with some efforts.

The {\em Davenport constant} $S(A)$ of a finite abelian group$A$ is the smallest number $s$ such that any sequence of $s$ elements of $A$ contains a nonempty subsequence adding up to the zero element of $A$. The name comes from that H.~Davenport proposed determining $S(A)$ in the case when $A$ is the ideal class group of a number field as a measure for non-uniqueness of factorization of the integers of the field. The general problem has become a famous question of additive combinatorics. Olson~\cite{Olson} determined the exact value of the Davenport constant of $p$-groups; in particular for $\Z_p^n$ it is $1+n(p-1)$. What we are looking for
is an "effective" Davenport constant: what is the smallest number $S'=S^{\cal B}(A)$ such that from any sequence of $S'$ elements of $A$, algorithm ${\cal B}$ finds a non-empty zero sum subsequence in time  polynomial in $S'\log\size{A}$ (roughly this is the bit size of the input sequence). In this paper we give a deterministic algorithm $\cal B$ running in time $\poly(n)$, that, for $p=O(1)$, given a sequence of $S^{\cal B}(\Z_p^n)=\poly(n)$ vectors from $\Z_p^n$, returns a zero sum subsequence. 

\smallskip

The structure of the rest of the paper is the following. In Section~\ref{sec:prelim}, we give some background
material on exact quantum procedures, on nilpotent black-box groups and on computations with  them, on (hidden) subgroups states and their purifications and present methods to convert subgroup states in the entire group to those in subgroups and - in certain very easy cases - in factor groups. Proposition~\ref{prop:HSP-to-HSMC}, the existence of an exact polynomial time reduction from the HSP in 
general nilpotent groups to the problem HSMC in semi-elementary groups is proved in Section~\ref{sec:HSP-to-HSMC}. Section~\ref{sec:convert} is devoted to converting several copies of a subgroup state in a semi-elementary group to a copy of a subgroup state in the abelian factor of the group. As an application of the technique, we prove Proposition~\ref{prop:mod_commutator} which tells us that we can solve by a polynomial time exact quantum algorithm the problem HSMC in a semi-elementary $p$-group of constant nilpotency class provided that we can find zero sum subsequences of sequences consisting of $\poly(n\log p)$ vectors from $\Z_p^n$ in time $\poly(n\log p)$. In Section~\ref{sec:zsum}, we prove Theorem~\ref{thm:zsum} on efficiently solvability the latter task in the case when $p$ is bounded by a constant. Propositions~\ref{prop:HSP-to-HSMC} and~\ref{prop:mod_commutator}, together with Theorem~\ref{thm:zsum}, immediately imply Theorem~\ref{thm:HSP}. Section~\ref{sec:conclusion} is devoted to concluding remarks.

\section{Preliminaries}\label{sec:prelim}

\subsection{On exact quantum computations}
To obtain sufficiently general intermediate results, we use the model of uniform circuit families described
by Nishimura and Ozawa~\cite{NishOzaUnif}. 
This is because some of the exact methods of \cite{II22}
as well as our main conversion technique
work under the assumption that the quantum Fourier transforms 
and their inverses modulo the prime factors
of $\size{G}$ can be exactly implemented. 
As it is pointed out in~\cite{NishOzaQTM}, this task cannot be 
accomplished using a fixed finite gate set. For the sake of transparency, 
we state our intermediate result using assumptions on availability of 
the quantum Fourier transforms rather than on gates 
required by the exact implementations of them. 
(See the implementation of the Fourier transform modulo general numbers
proposed by Mosca and Zalka~\cite{MosZal}.) 
Note however, that for the case of our Theorem~\ref{thm:HSP}, 
where these primes are assumed to be bounded by a constant, 
a constant number of gates are sufficient and hence, 
by~\cite{NishOzaPerfect}, the theorem remains valid 
in the quantum Turing machine model of Bernstein and Vazirani~\cite{BerVaz}.

\subsection{Groups}\label{subsec:groups}
For standard notations and concepts from group theory such as subgroups, normal subgroups, cosets, conjugates, commutators, commutator subgroup, center, etc., we refer the reader to the textbooks, e.g., to \cite{Robinson}. For subsets $U$ and $V$ of $G$ we denote by $UV$ the set $\{uv:u\in U,v\in V\}$. If both $U$ and $V$ are subgroups and either $U$ or $V$ is normal in $G$ then $UV$ is a subgroup. For subgroups $U,V$, by $[U,V]$ we denote the {\em subgroup generated by} the commutators $[u,v]$ ($u\in U,v\in V$). Recall that the lower central series of a finite group $G$ is the sequence $G=G_0>G_1>\ldots>G_c$ of normal subgroups $G_i\lhd G$ recursively defined as $G_i=[G,G_{i-1}]$. Here we assume that $c$ is the smallest index $i$ such that $G_i=[G,G_i]$. The group $G$ is nilpotent if $G_c=\{1\}$ and then $c$ is called the (nilpotency) class of $G$. A finite group is nilpotent if and only if it is the direct product of its Sylow subgroups.

To obtain sufficiently general results, we work over {\em black-box} groups with unique encoding of elements.
The concept captures various "real" groups such as permutation groups and matrix groups over finite fields. 
Elements of a black-box group are represented by binary strings of a certain length $\ell$ and the group operations  are given by oracles and as input, a generating set for the group is given. Subgroups will also be given by sets of generators. One can use the exact polynomial time quantum membership test of~\cite{II22}
to reduce the size of generating sets to at most $\log\size{G}$.

During the rest of this part, we assume that $G$ is a nilpotent black-box group of class $c$ and the prime factors of $\size{G}$ are known. 

For a normal subgroup $N$ of $G$, the subgroup $[G,N]$ is a normal subgroup of $G$ contained in $N$. 
If $\Gamma$ and $\Delta$ are sets of generators for $G$ and $N$, respectively, a generating set for $[G,N]$  can be obtained by taking the commutators $[x,y]$ for $x\in \Gamma,y\in \Delta$ and then adding iterated commutators with elements of $\Gamma$ until the subgroup generated by the elements stabilizes. For testing stabilization, one can use the exact quantum subgroup membership algorithm of~\cite{II22}. This gives a polynomial time exact method in particular to compute the lower central series. 

Below we describe efficient solutions to some further group theoretic tasks that we use in our hidden subgroup algorithm. The $p$-Sylow subgroup of $G$ can be computed as follows. Let $\Gamma$ be a generating set for $G$. Then for each $g\in\Gamma$ we compute the order $o_g$ of $g$ and decompose $o_g$ as the product
$p^\alpha {o_g}'$ where ${o_g}'$ is coprime with $p$. Then the $g^{{o_g}'}$ ($g\in \Gamma$) generate the (unique) $p$-Sylow subgroup of $G$. We shall compute hidden subgroups in $G$ by computing the intersections with the Sylow subgroups.

The normalizer of a subgroup of $G$ can be computed using the deterministic polynomial method of Kantor and Luks~\cite{KantorLux}. It was originally described for nilpotent permutation groups but it also finds normalizers in any nilpotent black box group of order having small prime factors only.

Assume that $L$ is a subgroup of $G$. It will be useful to decompose elements $x$ of $G$ as products of the form $\alpha_L(x)\beta_L(x)$ where $\beta_L(x)\in L$ and $\alpha_L(x)$ depend only on the coset $xL$. (Thus the range of $\alpha_L$ is a transversal of $L$ in $G$.) To this end, compute a chief series (a series of normal subgroups with cyclic factors of prime order) $G=K_0>K_1>\ldots>K_r=1$. Perhaps the easiest way to obtain such series is taking  a refinement of the lower central series. By taking the subgroups $K_iL$,
and removing repeated elements, we obtain a subnormal series $G=M_0>M_1>\ldots>M_s=L$ with cyclic factors of prime order. Also take elements $a_i\in M_{i-1}\setminus M_i$ and denote by $p_i$ the order of ${M_{i-1}/M_i}$
($i=1,\ldots,s$). Then the elements $a_1^{\gamma_1}a_2^{\gamma_2}\ldots a_s^{\gamma_s}$ ($(\gamma_1,\ldots\gamma_s)\in \prod_{i=1}^s\Z_{p_i}$) are a left transversal of $L$ in $G$. For an element $x\in G$, the representative of the coset in this transversal can be computed as follows. First we find the smallest non-negative integer $\gamma_1$ such that $xa_1^{-\gamma_1}\in M_1$ by computing the base $a_1$ discrete logarithm of $x$ modulo $M_1$. This can be done by solving an instance of the hidden subgroup problem in $\Z_{p_1}^2$. Specifically, we define the function $(\beta,\gamma)\mapsto\ket{x^\gamma}\ket{a_1^{-\beta}M_1}$. The function can be evaluated with the aid of computing the uniform superposition $\ket{M_1}$ using the exact version \cite{II22} of Watrous's method~\cite{Watrous}. The values are $p$ pairwise orthogonal states and the hidden subgroup is $\{(\delta,\gamma):x^\delta a_1^{-\gamma}\in M_1\}$. We use the exact hidden subgroup algorithm of~\cite{II22} to find a generator of this group. From this, $\gamma_1$
can be obtained in an obvious way. Now we proceed with $xa_1^{-\gamma_1}$ to compute $\gamma_2$, and so on.
We set $\alpha_L(x)=a_1^{\gamma_1}\ldots a_r^{\gamma_r}$ and $\beta_L(x)=\alpha_L(x)^{-1}x$.

If $L$ is a normal subgroup of $G$, we can encode the coset $xL$ by $\alpha_L(x)$. This makes the factor group $G/L$ a black-box group: the elements are encoded by the elements of the transversal $\{\alpha(x);x\in G\}$ and the multiplication oracle is obtained as a composition of the multiplication oracle for $G$ with the computation of the function $\alpha_L$.

\subsection{Subgroup states and purifications} \label{subsec:subgstates}
Let $G$ be a finite group and let $H$ be a subgroup of $G$. We consider elements of the group algebra $\C G$ as pure quantum states. (The "natural" scalar product $(\sum_x\alpha_x \ket{x}, \sum_y\alpha_y \ket{y})=\sum\alpha{\overline \beta}xy^{-1}$ makes $\C G$ a Hilbert space where the group elements form an orthonormal basis.) 

A (left) coset superposition of $H$ in $G$ is the uniform superposition $\ket{aH}=\frac{1}{\sqrt{\size{H}}}\sum_{h\in H}\ket{ah}$ where $a\in G$. The (left) subgroup state of $H$ in $G$ is the mixed state with the density matrix $$\subgstate{G}{H}=\frac{1}{\size{G}}\sum_{a\in G}\ket{aH}\bra{aH} =
\frac{1}{\size{G:H}}\sum_{a\in X}\ket{aH}\bra{aH},$$ where $X$ is any left transversal (a set of representatives of the left cosets) of $H$ in $G$. 

A {\em purification} of $\subgstate{G}{H}$ is any pure state $\ket{\psi}\in \C G\otimes V$ for some Hilbert space $V$ such that $\subgstate{G}{H}$ is the relative trace of $\ket{\psi}\bra{\psi}$ with respect to the second subsystem. For general facts about purification of mixed states, in particular for the connection with Schmidt decompositions, we refer the reader to Section~{2.5} of~\cite{NiChu}. The following lemma gives a characterization of purifications of subgroup states.

\begin{lemma}\label{lem:purification}
The pure state $\ket{\psi}\in \C G\otimes V$ is a purification of the subgroup state $\subgstate{G}{H}$ if and only if it can be written as $$\ket{\psi}=\frac{1}{\sqrt{\size{G}}}\sum_{x\in G}\ket{x}\ket{v(x)},$$
where the states $\ket{v(x)}$ and $\ket{v(y)}$ are equal if $x$ and $y$ are in the same left coset of $H$ and orthogonal otherwise.
\end{lemma}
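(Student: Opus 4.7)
The plan is to prove both directions by direct computation of the partial trace on one side and by appealing to the standard ``unitary freedom'' of purifications on the other.

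For the easy direction, suppose $\ket{\psi}=\frac{1}{\sqrt{\size{G}}}\sum_{x\in G}\ket{x}\ket{v(x)}$ with the stated orthogonality of the ancilla states. I would just compute
\[
\tr_V \ket{\psi}\bra{\psi} = \frac{1}{\size{G}}\sum_{x,y\in G}\braket{v(y)}{v(x)}\,\ket{x}\bra{y}.
\]
Because $\braket{v(y)}{v(x)}=1$ when $x,y$ lie in the same left coset of $H$ and $0$ otherwise, the double sum collapses to a sum over cosets: for each $a\in X$,
\[
\sum_{x,y\in aH}\ket{x}\bra{y} = \Bigl(\sum_{x\in aH}\ket{x}\Bigr)\Bigl(\sum_{y\in aH}\bra{y}\Bigr) = \size{H}\,\ket{aH}\bra{aH}.
\]
This gives $\tr_V\ket{\psi}\bra{\psi}=\frac{\size{H}}{\size{G}}\sum_{a\in X}\ket{aH}\bra{aH}=\subgstate{G}{H}$, so $\ket{\psi}$ is a purification.

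For the converse, the key observation is that the coset superpositions $\{\ket{aH}:a\in X\}$ are mutually orthogonal (they have disjoint supports in $\C G$), so the right-hand side of the definition of $\subgstate{G}{H}$ is already its spectral decomposition: the nonzero eigenvalues are all equal to $1/\size{G:H}$ with eigenvectors $\ket{aH}$. Introducing an auxiliary Hilbert space $W$ with orthonormal basis $\{\ket{a}:a\in X\}$, a canonical purification is
\[
\ket{\psi_0} = \frac{1}{\sqrt{\size{G:H}}}\sum_{a\in X}\ket{aH}\ket{a} = \frac{1}{\sqrt{\size{G}}}\sum_{x\in G}\ket{x}\ket{\alpha(x)},
\]
where $\alpha(x)$ is the unique element of $X$ with $x\in\alpha(x)H$. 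This is already of the required form (with $\ket{v(x)}=\ket{\alpha(x)}$). By the standard characterization of purifications (see Section 2.5 of \cite{NiChu}), any other purification $\ket{\psi}\in \C G\otimes V$ of the same mixed state is obtained from $\ket{\psi_0}$ by an isometry $U:W\to V$ on the ancilla, i.e. $\ket{\psi}=(I\otimes U)\ket{\psi_0}$. Setting $\ket{v(x)}=U\ket{\alpha(x)}$ and using that $U$ preserves inner products, one gets $\ket{v(x)}=\ket{v(y)}$ when $\alpha(x)=\alpha(y)$, i.e.\ when $x,y$ are in the same coset, and orthogonality otherwise.

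The only mildly delicate point is to make the canonical purification explicit enough that the ``unitary freedom'' step yields exactly the desired structure of the $\ket{v(x)}$; once the spectral decomposition of $\subgstate{G}{H}$ is written down with the $\ket{aH}$ as eigenvectors, everything follows. No part of this requires new machinery beyond elementary partial-trace computation and the textbook fact about purifications, so I do not expect any substantial obstacle.
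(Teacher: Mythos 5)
Your proof is correct and follows essentially the same route as the paper: both directions hinge on the observation that $\subgstate{G}{H}$ has the single nonzero eigenvalue $1/\size{G:H}$ with the coset superpositions $\ket{aH}$ as an eigenbasis, so that the Schmidt vectors on the $\C G$ side can be taken to be the $\ket{aH}$. The only cosmetic difference is that you package the converse as ``canonical purification plus isometric freedom'' where the paper manipulates the Schmidt decomposition of $\ket{\psi}$ directly, and you compute the partial trace explicitly in the forward direction where the paper merely regroups the sum over cosets.
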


\begin{proof}
The "if" part follows easily from that the conditions on $\ket{v()}$ imply $\ket{\psi}=\frac{1}{\sqrt{k}}\sum_{a\in X}\ket{aH}\ket{v(a)}$.

To see the the "only if" part, recall that a Schmidt decomposition of a state $\ket{\psi}\in \C G\otimes V$ is of the form $\ket{\psi}=\sum_{i=1}^m\lambda_i\ket{u_i}\ket{v_i}$ where $m=\size{G}$, $\ket{u_1},\ldots,\ket{u_m}$ is an {\em arbitrary} orthonormal basis of $\C G$ in which the relative trace of $\ket{\psi}\bra{\psi}$ w.r.t.~to the second subsystem is diagonal (with entries $\lambda_1,\ldots,\lambda_m$) 
and the system of the vectors $v_i$ corresponding to nonzero eigenvalues $\lambda_i$ is an orthonormal system of vectors in $V$. The vectors $v_i$ depend on the choice of the basis $\ket{u_i}$ ($i=1,\ldots,m$). Notice 
that the only nonzero eigenvalue of $\subgstate{G}{H}$ is $\frac{1}{k}$ with multiplicity $k$, where $k=\size{G:H}$. The coset superpositions give an orthonormal basis of the corresponding eigenspace. Thus, if $\ket{\psi}$ is a purification of $\subgstate{G}{H}$ then a Schmidt decomposition of $\ket{\psi}$ which is a purification of $\subgstate{G}{H}$ is of the form $\ket{\psi}=\frac{1}{\sqrt{k}}\sum_{i=1}^k\ket{u_i}\ket{v_i}$ where $\ket{u_1},\ldots,\ket{u_k}$ is an {arbitrary} orthonormal basis of the $\frac{1}{k}$-eigenspace of $\subgstate{G}{H}$ and $\ket{v_1},\ldots,\ket{v_k}$
is an orthonormal system of $V$.  In particular, if $X=\{a_1,\ldots,a_k\}$ then by taking $\ket{u_i}=\ket{a_iH}$ and by defining $\ket{v(x)}=\ket{v_i}$ for $x\in a_iH$, we obtain
$\ket{\psi}=\frac{1}{\sqrt{k}}\sum_{a\in X}\ket{aH}\ket{v(a)}=\frac{1}{\sqrt{\size{G}}}\sum_{x\in G}\ket{x}\ket{v(x)}$.
\end{proof}

\subsection{Basic subgroup state conversions}\label{subsec:basic_conv}
Given a subgroup $L$ of $G$, a copy of (a purification of) the subgroup state $\subgstate{G}{H}$ can be "converted" to a copy of (a purification of) $\subgstate{L}{H\cap L}$ by replacing $\ket{x}$ with the decomposition $\ket{\beta_L(x)}\ket{\alpha_L(x)}$ obtained by the method outlined in Subsection~\ref{subsec:groups} for $x\in G$, and "ignoring" $\ket{\alpha_L(x)}$ (passing this part to the purifying subsystem). To see this, let $\frac{1}{\sqrt{\size{G}}}\sum_{x\in G}\ket{x}\ket{\psi(x)}$ be a purification of $\subgstate{G}{H}$, with $\ket{\psi(x)}$ and $\ket{\psi(y)}$ are equal if and only if  $y^{-1}x\in H$, and orthogonal otherwise.  Then, the substitution gives the state $\frac{1}{\sqrt{\size{L}}}\sum_{x\in L}\ket{x}\frac{1}{\sqrt{\size{G:L}}}\sum_{y\in Y}\ket{y}\ket{\psi(yx)}$ where $Y=\{\alpha_L(z):z\in G\}$. Now if $x_1,x_2\in L$ are from the same left coset of $H\cap L$ then $\ket{\psi(yx_1)}=\ket{\psi(yx_2)}$ for every $y\in Y$ and hence the states $\frac{1}{\sqrt{\size{G:L}}}\sum_{y\in Y}\ket{y}\ket{\psi(yx_i)}$ are equal ($i=1,2$), while otherwise they do not overlap as for $y_1,y_2\in Y$ either  $\ket{y_1}$ and $\ket{y_2}$ are orthogonal or (for $y_1=y_2$) $\ket{\psi(y_1x_1)}$ and $\ket{\psi(y_1x_2)}$ are orthogonal. We shall refer to this procedure as {\em restriction}. The term is justified by that in the standard version of the HSP, one could obtain an instance of the HSP in the subgroup $L$ by restricting the ``hiding function'' to $L$.

Similarly, assume that $L$ is a normal subgroup of $G$ contained in $H$. Then a copy of (a purification of) 
the subgroup sate $\subgstate{G}{H}$ can be converted to a copy of  (a purification of) $\subgstate{G/L}{H/L}$ by replacing $x$ with $\ket{\alpha_L(x)}\ket{\beta_L(x)}$ and passing $\ket{\beta_L(x)}$ to the purifying subsystem. This corresponds to the technique called "pushing" in \cite{LomKaufGrover,LomKaufTool}.

\section{A group-theoretic reduction}\label{sec:HSP-to-HSMC}
In this section we prove the following.

\begin{proposition}\label{prop:HSP-to-HSMC}
Let $G$ be a nilpotent black-box group of class at most $c$ and assume that the prime factors of $\size{G}$ are given as part of the input and that for each such prime $p$ the quantum Fourier transform modulo a multiple of $p$ and its inverse can be implemented by an efficient exact quantum procedure. Then, the HSP in $G$ can be reduced by an exact procedure in time $\poly(\log \ell)$ to $\poly(\log \size{G})$ instances of the problem HSMC in semi-elementary quotient groups of subgroups of $G$. (The elements of $G$ are assumed to be uniquely encoded by strings of length $\ell$.)
\end{proposition}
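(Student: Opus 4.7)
My plan proceeds in two layers.

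First, I would exploit the direct-product decomposition of nilpotent groups: since $G$ is nilpotent, $G=\prod_p G_p$ is the internal direct product of its Sylow subgroups, and correspondingly $H=\prod_p(H\cap G_p)$. Computing each $G_p$ as in Subsection~\ref{subsec:groups} and applying the restriction procedure of Subsection~\ref{subsec:basic_conv} converts $\Xi_{G,H}$ to $\Xi_{G_p,H\cap G_p}$ for every $p\mid\size{G}$. It therefore suffices to handle the case where $G$ is a $p$-group.

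Second, for a $p$-group $G$ I would induct on $\size{G}$, at each step aiming to identify either a non-trivial normal subgroup $N\lhd G$ with $N\subseteq H$ (giving a push to the strictly smaller $G/N$) or a proper subgroup $K\lneq G$ containing $H$ (giving a restriction to $K$). The primary move is to restrict $\Xi_{G,H}$ to the elementary abelian central subgroup $L=\Omega_1(Z(G))$ and call HSMC there; this is legitimate because $L$ is a semi-elementary quotient (by the trivial subgroup) of the subgroup $L$ of $G$, and in the abelian case HSMC coincides with HSP. The call returns $H\cap L$, and if this is non-trivial I push by $N=H\cap L$ and recurse. In the contrary case I iterate the same move inside already-pushed quotients, accumulating a chain $1=N_0\lhd N_1\lhd\cdots\lhd N_t$ of normal subgroups of $G$ with each $N_i\subseteq H$, where each $N_i/N_{i-1}=(HN_{i-1}/N_{i-1})\cap\Omega_1(Z(G/N_{i-1}))$ is located by a further HSMC call in an elementary abelian subgroup of $G/N_{i-1}$ (reached via restriction plus the already-legitimate push by $N_{i-1}$).

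The chain terminates when the image of $H$ in the current $G/N_t$ has trivial intersection with $Z(G/N_t)$. If the image is itself trivial then $H=N_t$ and we are done; otherwise the $p$-group fact that every non-trivial normal subgroup meets the centre non-trivially forces $N_t$ to be the normal core of $H$ in $G$, so $H/N_t$ is a non-normal proper subgroup of $G/N_t$. By the classical subnormal property of $p$-groups, its normaliser $N_{G/N_t}(H/N_t)$ is then strictly between $H/N_t$ and $G/N_t$, which is exactly the proper subgroup $K$ we wanted to restrict to. The main obstacle I foresee is converting the subgroup state $\Xi_{G/N_t,H/N_t}$ into a concrete generating set for $H/N_t$ suitable for the Kantor--Luks normaliser procedure cited in Subsection~\ref{subsec:groups}, in an exact manner; my plan is to do this by iteratively calling exact abelian HSP from~\cite{II22} in cyclic overgroups obtained by lifting elements found in the central layers along the chain. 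Termination and the $\poly(\log\size{G})$ query bound follow since each outer recursive call strictly shrinks $\size{G}$, each chain has length at most $\log_p\size{G}$, and every step invokes only $\poly(\log\size{G})$ HSMC calls together with $\poly(\ell)$ other operations.
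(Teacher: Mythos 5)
Your first layer (Sylow decomposition plus restriction to each $p$-Sylow subgroup) matches the paper exactly. The second layer, however, has a gap that I do not think can be repaired along the lines you sketch, and it is precisely the obstacle you flag yourself.

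Your descent step, in the generic case where the image of $H$ meets the centre trivially, requires restricting to the normaliser $N_{G/N_t}(H/N_t)$. But $H$ is the hidden subgroup: to run Kantor--Luks you need a generating set for $H/N_t$, and possessing one would mean the HSP is already solved. Your proposed fix --- recovering generators of $H$ by abelian HSP calls in ``cyclic overgroups obtained by lifting elements found in the central layers'' --- cannot work in exactly the situation where you need it: when $H/N_t\cap Z(G/N_t)=1$, the central layers contain no information about elements of $H$, so there is nothing to lift. Note also that your reduction only ever invokes HSMC in \emph{elementary abelian} groups (where HSMC is just abelian HSP); if it were correct it would solve the HSP in all nilpotent $p$-groups using abelian HSP alone, which is far stronger than the proposition and would render the non-abelian conversion machinery of Sections~4--5 pointless. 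That is a strong signal the argument cannot close.

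The paper's proof avoids the circularity by never normalising the unknown $H$. It maintains a \emph{known} subgroup $H_0\le H$ (initially trivial) and sets $K=N_G(H_0)$, which is computable; the normaliser-growth property of nilpotent groups guarantees $H\cap K>H_0$ whenever $H_0<H$. It then shrinks $K$ (replacing $K$ by the preimage of $\Omega_1(K/K'H_0)$, which still satisfies $H\cap K>H_0$) until $K/K'H_0$ is elementary abelian, and at that point calls HSMC in the genuinely non-abelian semi-elementary group $K/H_0$ to obtain $(H\cap K)K'/H_0$. If this equals $K/H_0$ then $K\le H$ (since $K'$ lies in the Frattini subgroup of $K$) and $H_0$ is enlarged to $K$; otherwise $K$ is replaced by the proper subgroup $(H\cap K)K'$ and the inner loop continues. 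Each loop makes strict progress, giving the $\poly(\log\size{G})$ bound. The essential ingredients you are missing are (i) normalising a known approximation $H_0$ rather than $H$ itself, and (ii) using the non-abelian HSMC call on $K/H_0$ to produce a proper overgroup $(H\cap K)K'$ of $H\cap K$ without ever knowing $H\cap K$.
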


\begin{proof}
A finite nilpotent group $G$ is the direct product of its Sylow subgroups. Therefore any subgroup $H$ is the product of its Sylow subgroups. The $p$-Sylow subgroup of $H$ is $P\cap H$ where $P$ is the $p$-Sylow subgroup of $G$. The Sylow subgroups of $G$ can be computed using the method outlined in Subsection~\ref{subsec:groups}. One can convert subgroup states in $G$ to subgroup states in $P$ using restriction, see Subsection~\ref{subsec:basic_conv}.

In the rest of the description of the reduction we assume that $G$ is a $p$-group. We maintain a subgroup $H_0$ of $H$. Initially $H_0=\{1_G\}$. In each round of an outer loop of the algorithm $H_0$ will be increased if $H_0<H$. If $H_0$ is already $G$ then we can obviously stop. We will also maintain a subgroup $K$ 
of $G$ such that if $H_0<H$ then even $H_0<K\cap H$. Initially $K=N_G(H_0)$. This is a good choice because in a nilpotent group every proper subgroup has a strictly larger normalizer, therefore if $H_0<H$ then $H_0<N_H(H_0)=H\cap N_G(H_0)$. In an inner loop $K$ will be decreased until either $H_0$ is increased or $K$ becomes identical with $H_0$. In the latter case we can conclude that $H=H_0$ and stop the whole procedure. If the abelian factor $K/(K'H_0)$ is not elementary then we can replace $K$ with a proper subgroup as follows. Let $L>K'H_0$ be the subgroup of $K$ such that $L/(K'H_0)$ contains all the elements of order $p$ of $K/(K'H_0)$.  To compute $L$, first compute $K'$ and $K'H_0$. Then take the set of generators $\Gamma$ for $K$ and for each element $g\in \Gamma$, compute the smallest {\em positive} integer $\alpha_g$ such that $g^{p^{\alpha_g}}\in K'H_0$. The elements $g^{p^{\alpha_g-1}}$ ($g\in \Gamma$) generate $L$. If $L$ is a proper subgroup of $K$ then we replace $K$ with $L$ and repeat the step above. (Correctness of this is justified by observing that $L/H_0$ contains all the elements of order $p$ of $K/H_0$, whence if $H\cap K>H_0$ then also $H\cap L>H_0$.) Otherwise we have achieved that $K/(K'H_0)$ is elementary abelian. Then we compute $(H\cap K)K'/H_0$ using HSMC. If $(H\cap K)K'=K$ then $H\cap K=K$ because $K'$ is contained in every maximal subgroup of $K$. Then we can increase $H_0$ by replacing $H_0$ with $K$ and continue the outer loop. If $(H\cap K)K'<K$ we can replace $K$ with $(H\cap K)K'$ and continue the inner loop. 

Based on the descriptions above, we summarize the exact algorithm in the pseudocode below.

\begin{algorithm}[H]
\label{alg1}
\caption{Reduction to HSMC}
\scriptsize
\begin{algorithmic}[1]
\State \textbf{Initialize:} $H_0 \gets 1_G$;
\While{$H_0 < G$}
    \State{$K \gets N_G(H_0)$;}
    \State{${Found}\gets \mbox{False}$;}
    \While{${Found}=\mbox{False}$}
        \If{$K/(K'H_0)$ is elementery}
            \State{Use HSMC to compute $(H\cap K)K'/H_0$;}
            \If{$(H\cap K)K'=K$}
                \State{$H_0\gets K$;}
                \State{${Found}\gets \mbox{True}$;}
            \Else
                \State{$K\gets (H\cap K)K'$;}
                \If{$K=H_0$}
                    \State{\textbf{return} $H=K$.}
                \EndIf
            \EndIf
        \Else
            \State{For each $g\in \Gamma_K$ compute the smallest positive integer $\alpha_g$ with $g^{p^{\alpha_g}}\in K'H_0$;}
            \State{Compute $L=\langle g^{p^{\alpha_g-1}} ~|~ g\in \Gamma_K \rangle$;}
            \State{$K\gets L$;}
        \EndIf
    \EndWhile
\EndWhile
\end{algorithmic}
\end{algorithm}

If $\size{G}=p^n$ then the outer loop is executed at most $n$ times while within each round of the outer loop the inner loop has at most $n$ rounds. Thus we need at most $n^2$ calls to the HSMC procedure for factors
of subgroups of $G$ and further $n^2\poly(\ell)$ group and other operations. Note that all the groups we need to apply the HSMC procedure are of class at most $c$ because the family of nilpotent groups of class at most $c$ is closed under taking subgroups and factor groups.
\end{proof}

\section{The main conversion}\label{sec:convert}
Let $L$ be a subgroup of the center of $G$
isomorphic to $\Z_p^n$ where $p$ is a prime.
Then $L$ is a normal subgroup of $G$. Our aim
is to convert a copy of the subgroup state
$\subgstate{G}{H}$ to a copy of $\subgstate{G/L}{HL/L}$.
In the light of the second conversion ("pushing")
described in 
Subsection~\ref{subsec:basic_conv}, one could do it
by converting first to a copy of $\subgstate{G}{HL}$.

To this end, it would be desirable to have a procedure
that converts the coset superposition $\ket{aH}$ to
$\ket{aHL}$. A possible approach would be
computing $\ket{L}
=\frac{1}{\sqrt{\size{L}}}\sum_{z\in L}\ket{z}$
in a new register, multiplying $\ket{aH}$ with it
to obtain 
$\frac{1}{\sqrt{\size{HL}}}\sum_{z\in L}\sum_{x\in H}\ket{azx}\ket{z}$
and then trying to "disentangle" $\ket{z}$ from $\ket{azx}$. The
quantum Fourier transform of $L$ almost does this job: if we apply
it to the second register, we obtain the state
$$\frac{1}{\sqrt{\size{L}}}\sum_{y\in L}
\frac{\omega^{(y,z)}}{\sqrt{\size{HL}}}\sum_{z\in L}\sum_{x\in
H}\ket{azx}\ket{y},$$
where $\omega=e^{\frac{2\pi i}{p}}$ and
by $(,)$ we denote the standard scalar product of $L$
modulo $p$. For $y\in L$, let us denote by $P_y$
the linear transformation of $\C G$ mapping 
$\ket{x}$ to $\frac{1}{\sqrt{\size{L}}}\sum_{z\in L}\omega^{(y,z)}\ket{xz}$.
With this notation, the state we have can be rewritten as
$$\frac{1}{\sqrt{\size{L}}}
\sum_{y\in L}\ket{P_y(aH)}\ket{y}.$$
Using the assumption that $L$ is in the center of $G$,
a direct calculation shows that 
for every $x_1,x_2\in G$, we have
\begin{equation}
\label{eq:central}
\ket{P_y(x_1x_2)}=\ket{x_1P_y(x_2)}=\ket{(P_y(x_1)x_2)}.
\end{equation}

It is also straightforward to see that for every $x\in G$
and for every
$w\in L$, we have 
\begin{equation}
\label{eq:eigen}
\ket{w P_y(x)}=\ket{P_y(x)w}=\omega^{-(y,w)}\ket{P_y(x)}.
\end{equation}

We define the support of an element $\ket{u}$ of $\C G$ as the set
of elements appearing with nonzero coefficient in the
decomposition of $\ket{u}$ as a linear combination of group elements.
Using equality~(\ref{eq:central}), one can show that
if $x_1$ and $x_2$ are not in the same left coset of $LH$
then the states $\ket{P_y(aH)x_1^{-1}}$ and $\ket{P_y(aH)x_2^{-1}}$
are orthogonal. This is because the 
support of $\ket{P_y(aH)x_i^{-1}}$ is contained in
$LHx_i^{-1}=(x_iLH)^{-1}$ ($i=1,2$).
On the other hand, if $x_1H=x_2H$, then $Hx_1^{-1}=Hx_2^{-1}$ and
the two states are equal. 
By the characterization given in Lemma~\ref{lem:purification},
it follows that for any left coset $aH$, the
state $\frac{1}{\sqrt{\size{G}}}\sum_{x\in G}\ket{x}\ket{P_0(aHx^{-1})}$
is a purification of $\subgstate{G}{LH}$.

Of course it is hopeless to enforce $y=0$ in $\ket{P_y(aH)x_1^{-1}}$.
However, we can compute a state with essentially the same effect using
several copies of the subgroup state and by applying
an algorithm that finds
zero sum subsequences of sufficiently long sequences of elements of $L$.
Assume that we have a procedure that, for some $S=S(p,n)$, 
given an element ${\underline y}=(y_1,\ldots,y_S)\in L^S$ 
computes a non-empty subset $J({\underline y})$ of $\{1,\ldots,S\}$
such that
$\sum_{j\in J(\underline y)}y_j=0$.

Then, for a sequence $\ket{a_1 H}\ldots\ket{a_S H}$ we 
compute first 
$${\size L}^{-S/2}\sum_{{\underline y}\in L^S}
\ket{{\underline y}}\ket{P_{y_1}(a_1 H)}\ldots \ket{P_{y_S}(a_S H)}$$
by applying the Fourier method outlined above component-wise.
We next compute $\frac{1}{\sqrt{\size{G}}}\sum_{x\in G}\ket{x}$
in a fresh register and multiply by $x^{-1}$ the $j$th component
of $\ket{P_{y_1}a_1 H}\ldots \ket{P_{y_S}a_S H}$ if $j\in 
J({\underline y})$. Let $\chi_{\underline y}:\{1,\ldots,S\}\rightarrow
\{0,1\}$ denote the characteristic function of $J({\underline y})$.
Then the state we obtained
is $\frac{1}{\sqrt{\size{G}}}\sum_{x\in G}\ket{x}\ket{\psi(x)}$,
where $$\ket{\psi(x)}=
{\size L}^{-S/2}\sum_{{\underline y}\in L^S}
\ket{{\underline y}}\ket{P_{y_1}a_1 Hx^{-\chi_{\underline y}(1)}}
\ldots \ket{P_{y_S}a_S Hx^{-\chi_{\underline y}(S)}}.$$

Consider the term of $\ket{\psi(x)}$ corresponding to
any ${\underline y}$. As $J({\underline y})$ is non-empty,
we have that for $x_1,x_2$ not in the same left coset
of $LH$, the appropriate terms of $\ket{\psi(x_i)}$ are
orthogonal. As $\ket{\underline y}$ also appears in
the corresponding term, we have that $\ket{\psi(x_i)}$
are also orthogonal. On the other hand,
if $x_1,x_2$ are in the same left coset of $H$
then these states are equal term by term. Finally,
for $x\in L$, 
by (\ref{eq:eigen},
the term for ${\underline y}$ 
only gets a phase change by $\prod_{j\in J}({\underline y})\omega^{-(y_j,x)}=
\omega^{\sum_{j\in J({\underline y})}(y_j,x)}=\omega^0=1$
by the choice of $J({\underline y})$. It follows that
if $x_1$ and $x_2$ are in the same left coset of $LH$,
$\ket{\phi(x_1)}=\ket{\phi(x_2}$. Thus our 
state
is a purification of $\subgstate{G}{LH}$. As this
holds for any fixed $S$-tuple of left cosets of $H$, 
by linearity we also obtain 
a purification of $\subgstate{G}{LH}$ if we apply
the procedure to copies of a purification of $\subgstate{G}{H}$.
We obtained the following.

\begin{lemma}
\label{lem:main_convert}
Assume that we have an exact quantum procedure
(e.g., a deterministic polynomial time algorithm)
 that,
given any sequence $y_1,\ldots,y_S$ of $S=S(p,n)$ elements
of $\Z_p^n$, in time $T(p,n)\geq S(p,n)$
finds a non-empty subset of $\{1,\ldots,S\}$
such that $\sum_{j\in J}y_j=0$. Then we have an exact quantum procedure
using $n$ quantum Fourier transforms modulo $p$
that converts $S(p,n)$ copies of (a purification) of
$\subgstate{G}{H}$ to a copy of (a purification of)
$\subgstate{G/L}{HL/L}$ where $L$ is subgroup of the
center of $G$ isomorphic to $\Z_p^n$ in time 
$T(p,n)\poly(\log {\size{G}})$.
\end{lemma}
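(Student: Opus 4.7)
The plan is to split the conversion into two steps: first, to produce a purification of $\subgstate{G}{LH}$ from $S = S(p,n)$ copies of a purification of $\subgstate{G}{H}$; and second, since $L$ is normal in $G$ (being central) and contained in $LH$, to push through $L$ as in Subsection~\ref{subsec:basic_conv} and obtain a purification of $\subgstate{G/L}{HL/L}$. The second step is already established, so I would concentrate on the first, which realizes the construction sketched in the text preceding the statement.

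For that first step, I would take $S$ independent copies of a purification of $\subgstate{G}{H}$ in the form of Lemma~\ref{lem:purification}, prepare $\ket{L} = \frac{1}{\sqrt{\size{L}}}\sum_{z \in L}\ket{z}$ in $S$ auxiliary registers via the exact Watrous-style procedure recalled in Subsection~\ref{subsec:groups}, multiply each $\ket{a_j H}$ by the $j$-th copy of $\ket{L}$, and apply the quantum Fourier transform over $L \cong \Z_p^n$ to each auxiliary register. This yields a state in which the $S$ main registers carry $\ket{P_{y_1}(a_1 H)}\otimes\cdots\otimes\ket{P_{y_S}(a_S H)}$ indexed by $\underline y = (y_1,\ldots,y_S)\in L^S$. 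I would then invoke the assumed zero-sum subroutine on $\underline y$ to obtain a non-empty $J(\underline y)\subseteq\{1,\ldots,S\}$ with $\sum_{j\in J(\underline y)} y_j = 0$, prepare a fresh register holding $\frac{1}{\sqrt{\size{G}}}\sum_{x\in G}\ket{x}$, and, coherently controlled on $x$ and on $J(\underline y)$, right-multiply the $j$-th main register by $x^{-1}$ for every $j \in J(\underline y)$.

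To verify that the result is a purification of $\subgstate{G}{LH}$, I would apply Lemma~\ref{lem:purification} with the fresh $\ket{x}$-register as the "index" register and everything else (including $\ket{\underline y}$, the main registers, and the original purifying environments) as the combined purifying environment. For $x_1, x_2$ in distinct left cosets of $LH$, equation~(\ref{eq:central}) places the contents of the $j$-th main register (for any $j \in J(\underline y)$) into disjoint left cosets of $LH$, so the $\underline y$-termwise purifying states are orthogonal and hence so are the full purifying states. For $x_1, x_2$ in the same left coset of $LH$, write $x_1 x_2^{-1} = w h$ with $w \in L$ central and $h \in H$; the $h$-factor is absorbed into each $a_j H$, while the $w$-factor, by equation~(\ref{eq:eigen}), contributes the global phase $\prod_{j \in J(\underline y)} \omega^{-(y_j, w)} = \omega^{-(\sum_{j \in J(\underline y)} y_j, w)} = 1$, so the purifying states agree termwise in $\underline y$, and by linearity overall.

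The main obstacle I expect is bookkeeping rather than anything conceptually deep: one has to factor the coset relation $x_1 x_2^{-1} \in LH$ through the (non-direct) product $LH$, confirm that the $H$-part genuinely leaves $\ket{a_j H}$ unchanged, and see that the $L$-part is exactly the situation in which the zero-sum property forces the phases to cancel. The resource bound then follows: the only quantum-Fourier ingredient is the QFT over $L$, which is implemented as $n$ parallel QFTs mod $p$ (applied componentwise to the auxiliary registers); the zero-sum subroutine contributes time $T(p,n)$; and preparing the uniform superpositions, performing the controlled right-multiplications, and executing the pushing step add only $\poly(\log\size{G})$ black-box group operations, giving total time $T(p,n)\poly(\log\size{G})$ as claimed.
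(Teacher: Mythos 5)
Your proposal is correct and follows essentially the same route as the paper: the same Fourier-over-$L$ construction producing the $P_{y_j}$-twisted coset states, the same use of the zero-sum subroutine to select which registers get multiplied by $x^{-1}$, the same verification via Lemma~\ref{lem:purification} (orthogonality of supports across distinct $LH$-cosets via~(\ref{eq:central}), phase cancellation $\omega^{-(\sum_{j\in J(\underline y)}y_j,\,w)}=1$ via~(\ref{eq:eigen})), and the same final pushing step through $L$. The left/right bookkeeping you flag is indeed the only delicate point and is handled in the paper exactly as you anticipate.
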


Using the lemma in iteration and applying
the exact abelian hidden subgroup algorithm
of~\cite{II22}, we can derive the following.

\begin{proposition}
\label{prop:mod_commutator}
Let $G$ be a semi-elementary 
black-box group with unique encoding of order $p^n$.
Assume that 
the quantum Fourier transform modulo $p$ and its inverse
can be implemented by an efficient exact algorithm and that,
like in Lemma~\ref{lem:main_convert}, we have
an exact method to find zero sum subsequences of sequences of
$S(p,n)$ elements of $\Z_p^n$ in time $T(p,n)\geq S(p,n)$. 
Then the problem HSMC can be solved by an exact quantum
algorithm that uses 
$\poly(T(p,n)^{O(c)}\ell)$
elementary operations, 
$\poly(T(p,n)^{O(c)}\log{\size{G}})$
applications of the group oracle, 
calls to the oracle computing the purification of the subgroup state;
and the inverses of these.
(The elements of $G$ are assumed to be uniquely encoded
by strings of length $\ell$.)
\end{proposition}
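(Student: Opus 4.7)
The strategy is to iteratively convert copies of $\subgstate{G}{H}$ into copies of $\subgstate{G/G'}{HG'/G'}$ by pushing out one central elementary abelian layer at a time via Lemma~\ref{lem:main_convert}, and then feed the resulting subgroup state into the exact abelian hidden subgroup algorithm of~\cite{II22}.

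First I would construct, using the machinery of Subsection~\ref{subsec:groups}, a central series $G = N_0 > N_1 > \cdots > N_c = \{1\}$ with $N_1 = G'$ such that each factor $N_i/N_{i+1}$ is elementary abelian and lies in the center of $G/N_{i+1}$. Existence of such a series of length equal to the nilpotency class $c$ is guaranteed by the semi-elementary hypothesis (the standard central series with elementary abelian factors referenced in the introduction); it is obtained by refining the lower central series, exploiting that $G/G'$ has exponent $p$. The chief-series, normalizer and exact membership-test routines of Subsection~\ref{subsec:groups} compute such a series in exact polynomial time and realize each quotient $G/N_i$ as a black-box group through the coset-representative functions $\alpha_{N_i}$ and $\beta_{N_i}$.

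Next I would perform the conversion in $c-1$ stages. At the start of stage $i$, for $i$ running from $c-1$ down to $1$, I hold a purification of $\subgstate{G/N_{i+1}}{HN_{i+1}/N_{i+1}}$; at the initial stage $i=c-1$ this is simply the input purification of $\subgstate{G}{H}$, since $N_c=\{1\}$. Setting $\widehat G := G/N_{i+1}$ and $L := N_i/N_{i+1}$, the subgroup $L$ is central and elementary abelian in $\widehat G$, so Lemma~\ref{lem:main_convert} applied inside $\widehat G$ turns $S(p,n)$ copies of the current state into one copy of a purification of $\subgstate{\widehat G/L}{(HN_i/N_{i+1})/L}$, which by the third isomorphism theorem is a purification of $\subgstate{G/N_i}{HN_i/N_i}$. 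After stage $i=1$ we hold a purification of $\subgstate{G/G'}{HG'/G'}$, produced from $S(p,n)^{c-1}$ copies of the input per one output copy, i.e.~$T(p,n)^{O(c)}$ copies since $T(p,n)\ge S(p,n)$.

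Finally, since $G/G'$ is elementary abelian and we have an exact QFT modulo $p$, I would invoke the exact abelian hidden subgroup algorithm of~\cite{II22} on $\poly(\log\size{G})$ copies of $\subgstate{G/G'}{HG'/G'}$ to recover generators of $HG'/G'$; lifting these through the computed transversal maps yields generators of $HG'$ in $G$, solving HSMC. The overall cost tallies to the stated bounds, with the $T(p,n)^{O(c)}$ factor coming from the iterated conversion and the residual $\poly(\ell,\log\size{G})$ overhead absorbed in the polynomial. The main obstacle I anticipate is maintaining exactness all along the pipeline: at every level one must use only the exact variants of Watrous-style uniform superposition preparation, the QFT modulo $p$, the coset-representative computation and subgroup membership, and one must verify that the purifications output by Lemma~\ref{lem:main_convert} satisfy precisely (not approximately) the orthogonality condition of Lemma~\ref{lem:purification} with respect to the cosets of $HN_i/N_i$ at every recursion depth.
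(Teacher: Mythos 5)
Your proposal is correct and follows essentially the same route as the paper: compute the lower central series (whose factors are automatically elementary abelian since $G/G'$ is, being homomorphic images of tensor powers of $G/G'$), iterate Lemma~\ref{lem:main_convert} over its central elementary abelian layers to produce a purification of $\subgstate{G/G'}{HG'/G'}$ from $\prod_i S(p,n_i)\le S(p,n)^{c-1}$ copies of the input, and then run the exact abelian algorithm of~\cite{II22}. The only cosmetic difference is that you speak of ``refining'' the lower central series, whereas no refinement is needed; otherwise the argument and cost accounting match the paper's proof.
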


\begin{proof}
We compute the lower central series $G=G_0>G_1>\ldots>G_c=\{1\}$
using the method presented in Subsection~\ref{subsec:groups}. 
As $G/G'$ is elementary abelian, so are the factors
 $G_{i-1}/G_i$ ($i=1,\ldots,c$). This is because the factor groups
$G_{i-1}/G_i$ are homomorphic images of tensor powers (as $\Z$-modules)
of the $G/G'$, see Theorem~5.2.5 of~\cite{Robinson}. Also, isomorphisms
of $G_{i-1}/G_i$ with $\Z_p^{n_i}$ can be efficiently computed using the 
method of \cite{II22}.
Iteration of Lemma~\ref{lem:main_convert} gives a procedure to
convert $\prod_{i=2}^{c}S(n_i)$ copies of a purification
of $\subgstate{G}{H}$ to a copy of a purification of
$\subgstate{G/G'}{HG'/G'}$. The composition of instances of
the original subgroup state creating procedure (the calls to the
oracle) with the conversion gives
a procedure for
creating a purification of $\subgstate{G/G'}{HG'/G'}$. We can use
this as the oracle input for the exact 
hidden subgroup algorithm of~\cite{II22} in $\Z_p^{n_1}$.
For $i=1,\ldots,c$, we have $S(p,n_i)\leq S(p,n)$ and $T(p,n_i)\leq T(p,n)$
because  $\Z_p^{n_i}$ can be embedded in $\Z_p^n$ as a subgroup.
\end{proof}

In the non-exact setting, essentially 
the same proof gives the following.

\begin{proposition}
\label{prop:mod_comm_nonexact}
Let $G$ be a semi-elementary 
black-box group with unique encoding of order $p^n$.
Assume that there exists
a quantum (or a randomized) 
algorithm that finds zero sum subsequences of sequences of
$S(p,n)$ elements of $\Z_{p^n}$ in time $T(p,n)\geq S(p,n)$
with high probability. 
Then the problem HSMC can be solved by a quantum
algorithm that uses 
$\poly(T(p,n)^{O(c)}\ell)$
elementary operations, 
$\poly(T(p,n)^{O(c)}\log{\size{G}})$
applications of the group oracle and  
calls to the oracle computing the purification of the subgroup state.
\end{proposition}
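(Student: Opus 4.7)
The plan is to replay the proof of Proposition~\ref{prop:mod_commutator} almost verbatim, making only the changes forced by passing from the exact to the bounded-error regime. First, compute the lower central series $G = G_0 > G_1 > \cdots > G_c = \{1\}$ deterministically as in Subsection~\ref{subsec:groups}; semi-elementarity together with Theorem~5.2.5 of~\cite{Robinson} forces each factor $G_{i-1}/G_i$ to be elementary abelian of some rank $n_i \le n$, and an isomorphism $G_{i-1}/G_i \cong \Z_p^{n_i}$ is computed by the method of~\cite{II22}. This part is unchanged from the exact setting.

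Next, iterate Lemma~\ref{lem:main_convert} down the central series. The quantum circuit inside the conversion is unchanged; only its zero-sum subroutine is now randomized or quantum, and is invoked coherently on a register holding the superposition over $\underline{y}\in L^S$. To control branch-wise failures, I would boost the per-call success probability to $1-\eta$ by the standard amplify-and-verify trick (checking that a candidate subset sums to zero is trivial), incurring a factor $\poly(\log 1/\eta)$ in cost; then choose $\eta$ inverse-polynomially small relative to the total number of subroutine invocations so that a union bound keeps the produced state within constant trace distance of the ideal purification of $\subgstate{G/G'}{HG'/G'}$.

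Finally, feed that approximate purification as oracle input to a standard bounded-error quantum algorithm for the HSP in $\Z_p^{n_1}$ to recover $HG'/G'$, which is exactly the output demanded by HSMC. Counting resources as in Proposition~\ref{prop:mod_commutator}, with the $\poly(\log 1/\eta)$ amplification overhead absorbed into the outer polynomial, yields the claimed bounds of $\poly(T(p,n)^{O(c)}\ell)$ elementary operations and $\poly(T(p,n)^{O(c)}\log\size{G})$ oracle calls.

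The main obstacle I expect is the coherent bookkeeping: the amplified subroutine must be implemented reversibly, leaving no uncomputed workspace that could destroy interference at subsequent conversion steps or at the final abelian HSP call. This is routine but requires care, since the zero-sum subset $J(\underline{y})$ produced should behave as a deterministic function of $\underline{y}$ on the success branches — for instance by always returning the lexicographically smallest valid subset found across repeated trials, and by uncomputing all intermediate randomness before the next conversion layer is entered.
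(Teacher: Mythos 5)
Your proposal is correct and follows exactly the route the paper intends: the paper gives no separate argument for Proposition~\ref{prop:mod_comm_nonexact}, stating only that ``essentially the same proof'' as for Proposition~\ref{prop:mod_commutator} applies, which is precisely what you do. The additional details you supply (coherent amplify-and-verify of the zero-sum subroutine, uncomputing randomness into the purifying system, and a union bound to keep the state within constant trace distance of the ideal purification) are the standard bookkeeping the paper leaves implicit, and they are sound.
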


\section{Zero sum subsequences in $\Z_p^n$}
\label{sec:zsum}

In this section, we assume that our input
is a sequence
of vectors from $\Z_p^n$. We also assume that
$p$ is an odd prime as for $p=2$ a zero sum
subsequence can be obtained from $n+1$ vectors
in the form of a zero linear combination.
As subsequences 
can be represented as subsets of the index set,
it will not be too
misleading to use the term
(sub)set  for a (sub)sequence.
Our strategy will be finding $p$ pairwise
disjoint subsets of input vectors having
equal sums. We will achieve this goal
by designing a method for finding a nontrivial
pair of subsets having equal sum
and then, like in~\cite{IS17},
applying the algorithm recursively to
obtain $4$, $8$, $16$, etc.~disjoint subsets with equal
sum. 

Note that a pair of disjoint subsets with equal sum
can be interpreted as a 
representation of the zero vector by a linear combination of
the input vectors 
with nonzero coefficients $1$ or $-1$ only. 
Based on this, it will be convenient to use the term
{\em signed subsets} and signed subset sums.
A signed subset of a set $S$ of vectors
is formally a function from $S$ to the set
$\{0,1,-1\}$. The support of such a signed subset
is the set of elements on which the function
takes nonzero values. With some sloppiness, we use the 
term {\em signed subset sum} to refer both to the signed subset
and to the value of the signed sum. (Technically,
a signed subset sum could be a data structure consisting of
the description of the signed subset and the value.) 
We
call two or more subset sums disjoint if their 
supports are pairwise disjoint. Based on the observation
that a signed subset sum of vectors that are
results of pairwise disjoint subset sums is again a signed subset sum
of the original vectors, 
one can build signed subset
sums hierarchically from smaller disjoint signed subset sums.
The trivial subset sum corresponds to the empty set 
with the zero vector as value.

A
 {\em linear relation} (or just relation for short)
among a collection of
vectors is an array of coefficients such that
the corresponding linear combination
is the zero vector.
It is often useful to omit
the vectors to which coefficient zero are assigned.
By taking the signed subsets of the vectors
having the same or the opposite coefficient
in a linear relation, we obtain
a linear relation among pairwise disjoint
signed subset
sums in which the coefficients are form
$\{1,\ldots,\frac{p-1}{2}\}$ and each coefficient
appears at most once. We call such a relation
of signed subset sums {\em standard}. 

We shall build standard linear relations
among signed subset sums with smaller and smaller
coefficients (among increasingly larger subset sums). 
The key idea is constructing first $\frac{(p-1)^2}{4}$
pairwise signed subset sums arranged in a square matrix 
having a relation in each row as well as in each column and subtracting
the sum of higher half of "horizontal" relations from the
sum of
the higher half of the "vertical" relations to obtain
a relation with coefficients between $1$ and $\frac{p-1}{4}$,
and iterating the construction. 
We give the details
in the following lemma and its proof.
(We present a version that even saves up
maintaining the first half of vertical relations.)

\begin{lemma}
\label{lem:halving}
Let $d$ be a positive integer. Assume that there is a deterministic
procedure $\cal A$ that, given $h(d,n)$ vectors from $\Z_p^n$, in 
time $\poly(h(d,n)\log p)$ finds $d$ pairwise disjoint signed subset 
sums $v_1,\ldots,v_{d}$ of the input vectors, not all empty, 
such that $\sum_{i=1}^{d}iv_i=0$. Then there also exists a deterministic 
procedure that, given $h(d,n)h(d,\lceil d/2 \rceil n)$ vectors,
in time $\poly(h(d,n)h(d,\lceil d/2 \rceil n)\log p)$
finds pairwise disjoint signed subset sums $w'_1,\ldots,w'_{\lfloor d/2\rfloor}$,
not all empty, such that $\sum_{i=1}^{\lfloor d/2\rfloor}iw'_i=0$. 
\end{lemma}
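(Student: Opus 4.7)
The plan is to invoke $\mathcal{A}$ twice: once to extract ``row'' relations from disjoint blocks of the input, and once to assemble a ``column'' relation whose coefficients can then be reduced to absolute value at most $\lfloor d/2 \rfloor$ by subtracting a suitable $\{-1,0,1\}$-combination of the row relations.

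First I would partition the $h(d, n) \cdot h(d, \lceil d/2 \rceil n)$ input vectors into $N := h(d, \lceil d/2 \rceil n)$ disjoint blocks of $h(d, n)$ vectors each, and apply $\mathcal{A}$ to each block $j$ to obtain pairwise disjoint signed subset sums $a_{j, 1}, \ldots, a_{j, d} \in \Z_p^n$ satisfying the row relation $R_j : \sum_{i=1}^d i \, a_{j,i} = 0$; the $a_{j, i}$ remain pairwise disjoint across all $(j, i)$ because different blocks draw on disjoint inputs. Next, I would concatenate the upper halves to form $u_j := (a_{j, \lfloor d/2 \rfloor + 1}, \ldots, a_{j, d}) \in \Z_p^{\lceil d/2 \rceil n}$ and apply $\mathcal{A}$ to $u_1, \ldots, u_N$, obtaining pairwise disjoint (at the block-index level) signed subset sums $b_1, \ldots, b_d$ with $\sum_{k=1}^d k b_k = 0$ and not all empty. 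Writing $b_k = \sum_j \epsilon_{k, j} u_j$ with $\epsilon_{k, j} \in \{-1, 0, 1\}$, I set $\gamma_j := k \epsilon_{k, j}$ for the unique $k$ with $\epsilon_{k, j} \neq 0$ (and $\gamma_j := 0$ otherwise). Reading $\sum_k k b_k = 0$ component-wise gives column relations $C_i : \sum_j \gamma_j a_{j, i} = 0$ for each $i > \lfloor d/2 \rfloor$, with at least one $\gamma_{j_0} \neq 0$.

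Then set $\lambda_j := \mathrm{sgn}(\gamma_j)$ when $|\gamma_j| > \lfloor d/2 \rfloor$, and $\lambda_j := 0$ otherwise, and expand the identity $\sum_{i > \lfloor d/2 \rfloor} C_i - \sum_j \lambda_j R_j = 0$ symbolically as $\sum_{(j, i)} c_{j, i} a_{j, i} = 0$. A short case split on whether $\lambda_j = 0$ or $\lambda_j = \pm 1$ and on whether $i \leq \lfloor d/2 \rfloor$ or $i > \lfloor d/2 \rfloor$ shows $|c_{j, i}| \leq \lfloor d/2 \rfloor$ throughout; the tight case is $\lambda_j = \pm 1$, $i > \lfloor d/2 \rfloor$, giving $c_{j, i} = \mathrm{sgn}(\gamma_j)(|\gamma_j| - i)$ with both $|\gamma_j|$ and $i$ in $\{\lfloor d/2 \rfloor + 1, \ldots, d\}$. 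For each $m \in \{1, \ldots, \lfloor d/2 \rfloor\}$ I would define $w'_m := \sum_{(j, i) : |c_{j, i}| = m} \mathrm{sgn}(c_{j, i}) \, a_{j, i}$. Pairwise disjointness of the $a_{j, i}$ makes each $w'_m$ a well-defined signed subset of the input, the $w'_m$ themselves are pairwise disjoint, and the symbolic identity yields $\sum_{m=1}^{\lfloor d/2 \rfloor} m w'_m = 0$.

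The main obstacle is ensuring that the $w'_m$ are not all empty as signed subsets of the input. This can fail only if, for every $(j, i)$ with $a_{j, i}$ non-empty, $c_{j, i} = 0$. Focusing on a block $j_0$ with $\gamma_{j_0} \neq 0$, the case analysis forces one of two scenarios: either $\lambda_{j_0} = 0$ with all non-empty $a_{j_0, i}$ supported on $i \in \{1, \ldots, \lfloor d/2 \rfloor\}$, or $\lambda_{j_0} \neq 0$ with the only non-empty $a_{j_0, i}$ occurring at $i = |\gamma_{j_0}|$. In the first scenario $R_{j_0}$ collapses to $\sum_{i=1}^{\lfloor d/2 \rfloor} i \, a_{j_0, i} = 0$, and returning $w'_i := a_{j_0, i}$ for $i \leq \lfloor d/2 \rfloor$ already satisfies the lemma; in the second, $a_{j_0, |\gamma_{j_0}|}$ is itself a non-empty signed subset summing to zero and can be returned as $w'_1$. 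Both shortcuts are detectable deterministically in polynomial time, so the full procedure always produces a valid output, with total running time dominated by the two calls to $\mathcal{A}$ and the $O(Nd)$ bookkeeping of the $c_{j, i}$, giving the claimed $\poly(h(d, n) h(d, \lceil d/2 \rceil n) \log p)$ bound.
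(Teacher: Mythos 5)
Your proof is correct and follows essentially the same route as the paper's: two levels of applications of $\mathcal{A}$ (once per block, once on the concatenated upper halves), followed by subtracting a sign-selected combination of the row relations from the sum of the column relations so that all coefficients drop to absolute value at most $\lfloor d/2\rfloor$, then regrouping by coefficient magnitude and handling the all-empty degenerate case separately. The only difference is bookkeeping — you keep the per-block sums $a_{j,i}$ with scalar weights $\gamma_j,\lambda_j$ instead of assembling the paper's matrix $w_{ij}$ — but the resulting linear combination is literally the same one.
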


\begin{proof}
We divide the input set into $h(d,\lceil d/2\rceil n)$ pairwise disjoint parts
of size $h(d,n)$. We apply procedure $\cal A$ within each part.
This way for each 
$k=1,\ldots,h(d,\lceil d/2\rceil n)$, 
we get 
$d$ pairwise disjoint subset sums $u_{k1},\ldots,u_{kd}$,
not all empty, such that $\sum_{j=1}^{d} ju_{kj}=0$. 
For each $k$ we consider the concatenation $u_k$ of the vectors
$u_{kj}$ ($j=\lfloor d/2 \rfloor +1,\ldots,d$). These are vectors of dimension $
\lceil d/2 \rceil n$.
We apply procedure $\cal A$ to find pairwise disjoint signed subsets 
$M_{1},\ldots,M_{d}$ such that 
$\sum_{i=1}^{d} iu_i'=0$, where $u_i'$ is the signed
sum of the $u_k$s corresponding to the signed subset $M_i$.
Now for each $1\leq i\leq d$, $u_i'$ is the concatenation
of vectors $w_{ij}$ ($j=\lfloor d/2\rfloor+1,\ldots,d$). Here,
for $1\leq i,j\leq d$, $w_{ij}$ stands for the
signed subset sum obtained by joining the signed subset sums $u_{kj}$ according
to the signed subset $M_i$. The signed subset sums $w_{ij}$ are
pairwise disjoint, not all of them are empty and they satisfy the relations
$$\sum_{i=1}^{d} iw_{ij}=0\;\; (j=\lfloor d/2\rfloor+1,\ldots,d)$$
and
$$\sum_{j=1}^{d} jw_{ij}=0\;\; (i=1,\ldots,d).$$
We subtract the sum of the last $\lceil d/2\rceil$ ("horizontal")
relations
of the second kind from the sum the 
$\lceil d/2\rceil$ ("vertical") relations of the first kind and obtain
the relation
$$\sum_{i=1}^{\lfloor d/2 \rfloor}
\sum_{j=\lfloor d/2 \rfloor+1}^{d} iw_{ij}
-
\sum_{i=\lfloor d/2 \rfloor+1}^{d}
\sum_{j=1}^{\lfloor d/2 \rfloor}jw_{ij}
+\sum_{i,j=\lfloor d/2 \rfloor+1}^{d}(i-j)w_{ij}=0. 
$$

Notice that for $\lfloor d/2 \rfloor+1\leq i,j\leq d$,
we have $|i-j|\leq \lfloor d/2 \rfloor$. Therefore,
by flipping signs where appropriate and then joining
the signed subsets with equal coefficients,
we obtain pairwise disjoint subset sums $w'_1,\ldots,w'_{\lfloor d/2 \rfloor}$
with $\sum_{i=1}^{\lfloor d/2 \rfloor}iw'_i=0$.

The subset sums $w_i'$ can all
be empty only if each $w_{ij}$
is empty
when $i\neq j$ and at least one of $i$ and $j$
is greater than $\lfloor d/2 \rfloor$. Assume that this is
the case. Then, if there is an index $i>\lfloor d/2 \rfloor$
such that $w_{ii}$
is non-empty then $w_{ii}$ 
must be
itself a nontrivial zero subset sum and gives a one-term
solution. Otherwise not all $w_{ij}$
are empty for
$i,j\leq \lfloor d/2 \rfloor$ and $\sum_{i,j=1}^{\lfloor d/2
\rfloor}iw_{ij}=0$.
\end{proof}

Iterated application of the method of Lemma~\ref{lem:halving}
gives the following result.

\begin{proposition}
\label{prop:signed}
Given $S_{\pm}(p,n)=p^{O(p\log p)}n^{O(p)}$ vectors
from $\Z_p^n$, 
a nontrivial signed subset sum representing the
zero vector can be found in deterministic time
$\poly(S_{\pm}(p,n))$.
\end{proposition}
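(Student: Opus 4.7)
The plan is to iterate Lemma~\ref{lem:halving} roughly $\log_2 p$ times, driving the largest coefficient appearing in a standard relation from $(p-1)/2$ down to $1$; when $d=1$, the output consists of a single nonempty signed subset sum $w'_1$ satisfying $1\cdot w'_1 = 0$, which is exactly a nontrivial signed subset sum representing the zero vector.

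For the base case at $d_0 := (p-1)/2$, I would use Gaussian elimination: any $n+1$ vectors in $\Z_p^n$ admit a nontrivial linear relation $\sum_i c_i v_i = 0$ computable in $\poly(n\log p)$ time. Choosing representatives $c_i \in \{-(p-1)/2,\ldots,(p-1)/2\}$ and, for each $j\in\{1,\ldots,(p-1)/2\}$, defining $v_j$ as the signed subset sum obtained by taking the $v_i$ with $c_i = j$ with sign $+1$ and those with $c_i = -j$ with sign $-1$, yields pairwise disjoint signed subset sums $v_1,\ldots,v_{(p-1)/2}$, not all empty, with $\sum_{j=1}^{(p-1)/2} j v_j = 0$. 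Thus the base procedure $\A_0$ needs $h(d_0,n) = n+1$ vectors. Setting $d_{k+1} := \lfloor d_k/2\rfloor$ and applying Lemma~\ref{lem:halving} $K := \lceil \log_2 d_0 \rceil = O(\log p)$ times constructs procedures $\A_1,\ldots,\A_K$ with the desired output at level $K$.

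The main obstacle is bounding the input-size blow-up through the $O(\log p)$ rounds. Let $N_k(n)$ denote the number of vectors required by $\A_k$. Then $N_0(n) = n+1$, and Lemma~\ref{lem:halving} gives the recursion $N_{k+1}(n) \le N_k(n)\cdot N_k(\lceil d_k/2\rceil\, n)$, with $d_k \le p$. Using the ansatz $N_k(n) \le A_k (n+1)^{2^k}$, one gets $A_{k+1} \le A_k^2\, p^{2^k}$, so $\log A_{k+1} \le 2\log A_k + 2^k \log p$; dividing by $2^{k+1}$ and telescoping,
\[
\frac{\log A_K}{2^K} \;\le\; \log A_0 \;+\; \frac{K\log p}{2} \;=\; O(\log^2 p),
\]
hence $A_K = p^{O(p\log p)}$. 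Since $2^K = O(p)$, this yields $N_K(n) = p^{O(p\log p)}\, n^{O(p)}$, matching the claimed $S_\pm(p,n)$. The running time is $\poly(S_\pm(p,n))$ because each of the $O(\log p)$ levels invokes the previous procedure on at most $S_\pm(p,n)$ vectors and performs only polynomially many joins of signed subset sums, all within the $\poly(h\log p)$ time guaranteed by Lemma~\ref{lem:halving}.
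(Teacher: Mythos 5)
Your proposal is correct and follows essentially the same route as the paper: the same base procedure (a nontrivial linear relation among $n+1$ vectors folded into a standard relation with coefficients $1,\ldots,\frac{p-1}{2}$), the same $O(\log p)$-fold iteration of Lemma~\ref{lem:halving}, and the same recursion for the required input size, which you solve by an ansatz and telescoping where the paper proves an explicit product bound by induction --- both yielding $p^{O(p\log p)}n^{O(p)}$.
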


\begin{proof}
Put $d_0=\frac{p-1}{2}$, $h_0(n)=n+1$
and define  
$d_i=\lfloor d_{i-1}/2\rfloor$ and 
$h_i(n)=h_{i-1}(n)h_{i-1}(\lceil d_{i-1}/2\rceil n)$
recursively for $i=1,\ldots,\floor{\log d_0}$.
As among any $h_0(n)=n+1$ vectors from $\Z_p^n$
a nontrivial linear relation can be found
in time $\poly(n\log p)$, recursive
applications of Lemma~\ref{lem:halving} 
gives that among $h_{\floor{\log d_0}}(n)$ vectors
a single nontrivial signed subset sum 
(that is, a 
linear relation
with nonzero coefficients $\pm 1$ 
only)
can be found in time $\poly(h_{\floor{\log d_0}}(n)\log p)$.
We show by induction that 
\begin{equation}
\label{ineq:complicated}
h_i(n)\leq \left(\prod_{j=0}^{i-1}\lceil d_j/2\rceil\right)
^{2^{i-1}}(n+1)^{2^i}.
\end{equation}
For $i=0$, both sides are equal to $n+1$.
Assume that the inequality holds for $0\leq i <\floor{\log d_0}$.
Then we also have
\begin{equation*}
h_i(\lceil d_i/2\rceil n)\leq \left(\prod_{j=0}^{i-1}\lceil d_j/2\rceil\right)
^{2^{i-1}}(\lceil d_i/2\rceil n+1)^{2^i}.
\end{equation*}
Using $\lceil d_i/2\rceil n+1\leq \lceil d_i/2\rceil(n+1)$,
we obtain
\begin{equation}
\label{ineq:morecomplicated}
h_i(\lceil d_i/2\rceil n)\leq \lceil d_i/2\rceil^{2^{i-1}}
\left(\prod_{j=0}^{i}\lceil d_j/2\rceil \right)
^{2^{i-1}}(n+1)^{2^i}.
\end{equation}
Multiplying inequalities (\ref{ineq:complicated})
and (\ref{ineq:morecomplicated}) 
and using $h_{i+1}(n)=h_i(n)h_i(\lceil d_i/2\rceil n)$,
we obtain
\begin{equation*}
h_{i+1}(n)
\leq \left(\prod_{j=0}^{i}\lceil d_j/2\rceil\right)
^{2^{i}}(n+1)^{2^{i+1}},
\end{equation*}
 which is
inequality (\ref{ineq:complicated}) for $i+1$
in place of $i$.
Using $d_j\leq d_0/2^j\leq d_0=\frac{p-1}{2}$,
inequalities (\ref{ineq:complicated}) for
$i=\floor{\log d_0}$ gives

$$h_{\floor{\log d_0}}\leq \left(\prod_{j=0}^{\left\lceil \log \frac{p-1}{2} \right\rceil-1}\left\lceil \frac{p-1}{4}\right\rceil\right)
^{2^{\left\lceil \log \frac{p-1}{2} \right\rceil-1}}(n+1)^{2^{\left\lceil \log \frac{p-1}{2} \right\rceil}}$$

$$=\left\lceil \frac{p-1}{4}\right\rceil^{\left\lceil\frac{p-1}{4}\right\rceil \left\lceil\log \frac{p-1}{2} \right\rceil} (n+1)^{\left\lceil\frac{p-1}{2}\right\rceil} .$$
Therefore, we have
$$h_{\floor{\log d_0}}(n)=p^{O(p\log p)}n^{O(p)}.$$
\end{proof}

We interpret a non-empty zero sum signed subset
as a non-trivial collision between two disjoint
subset sums. (Non-trivial means that at most one
of the subsets can be empty.) We use the short term 
{\em collision} for such a pair. We have the following.

\begin{proposition}
\label{prop:coll_doubling}
Suppose that there is an algorithm $\cal B$ that, given a set of
vectors from $\Z_p^n$ of size $S_\pm(p,n)$ finds 
a collision. Then there is a 
deterministic 
procedure that, given ${S_\pm(p,n)}^{\lceil \log p \rceil}$
vectors, finds a nontrivial zero sum subset using less than
$S_\pm^{\lceil \log p\rceil}$ 
applications of algorithm $\cal B$ and
$\poly((S_\pm(p,n))^{\lceil \log p \rceil})$ other operations.
\end{proposition}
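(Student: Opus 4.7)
The plan is to iteratively \emph{double} the number of pairwise disjoint input-subsets that share a common sum. Once we have produced $2^{\lceil \log p \rceil} \geq p$ such subsets with a common value $s \in \Z_p^n$, the union of any $p$ of them is a nonempty subset whose sum equals $p s = 0$, which is the desired nontrivial zero sum subset.

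The base case ($k=1$) is a single call to $\cal B$: it returns a pair of disjoint index sets $(I,J)$, not both empty, with $\sum_{i\in I} v_i = \sum_{j\in J} v_j$. If one of $I,J$ is empty the other is already a nontrivial zero sum subset and we halt; otherwise we have $2^1=2$ nonempty pairwise disjoint subsets with equal sum, obtained from $S_\pm(p,n)$ vectors using one call to $\cal B$.

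For the inductive doubling step, suppose we have a procedure that, from $T_k$ vectors, produces $2^k$ pairwise disjoint nonempty subsets with a common sum using $N_k$ calls to $\cal B$. Given $T_{k+1} := S_\pm(p,n)\cdot T_k$ fresh vectors, split them into $S_\pm(p,n)$ disjoint chunks of size $T_k$ and run the inductive procedure in each chunk $i$, obtaining pairwise disjoint subsets $A_i^{(1)},\ldots,A_i^{(2^k)}$ all summing to some $s_i\in\Z_p^n$. Feed the sequence $s_1,\ldots,s_{S_\pm(p,n)}$ to $\cal B$ and let $(I,J)$ be the returned collision. If $I=\emptyset$ (symmetrically if $J=\emptyset$) then $\sum_{j\in J} s_j = 0$, so $\bigcup_{j\in J} A_j^{(1)}$ is a nontrivial zero sum subset and we halt. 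Otherwise, for each $\ell \in \{1,\ldots,2^k\}$ set
\[
B_\ell \;=\; \bigcup_{i\in I} A_i^{(\ell)}, \qquad C_\ell \;=\; \bigcup_{j\in J} A_j^{(\ell)}.
\]
These $2^{k+1}$ sets are pairwise disjoint — chunks are pairwise disjoint, and within each chunk the $A_i^{(\ell)}$ are pairwise disjoint across $\ell$ — and each of them has sum $\sum_{i\in I} s_i = \sum_{j\in J} s_j$. Thus the inductive invariant is restored with $T_{k+1}$ vectors and $N_{k+1} = S_\pm(p,n)\cdot N_k + 1$ calls to $\cal B$.

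Unrolling with $T_1 = S_\pm(p,n)$ and $N_1=1$ gives $T_{\lceil \log p\rceil} = S_\pm(p,n)^{\lceil \log p\rceil}$ and $N_{\lceil \log p\rceil} = 1 + S_\pm(p,n) + \cdots + S_\pm(p,n)^{\lceil \log p\rceil - 1} < S_\pm(p,n)^{\lceil \log p\rceil}$, matching the bounds claimed. All remaining bookkeeping — maintaining each subset as a bitmask over the original input and computing unions and sums modulo $p$ — costs $\poly(S_\pm(p,n)^{\lceil \log p\rceil})$. The only subtle point is to remember to handle the degenerate collision where one side of the pair returned by $\cal B$ is empty; fortunately that case hands us a zero sum subset outright, so there is no real obstacle and the argument is essentially a clean induction on the doubling template.
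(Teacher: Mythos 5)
Your proof is correct and follows essentially the same strategy as the paper's: repeatedly doubling the number of pairwise disjoint equal-sum subsets by running $\cal B$ on the chunk sums, until $2^{\lceil\log p\rceil}\geq p$ disjoint subsets with a common sum $s$ are obtained, whose union then sums to $ps=0$; the call count $1+S_\pm+\cdots+S_\pm^{\lceil\log p\rceil-1}<S_\pm^{\lceil\log p\rceil}$ matches as well. The only cosmetic difference is that you phrase the construction as a top-down recursion and discharge the degenerate (one-side-empty) collisions at each level, whereas the paper iterates bottom-up and defers that case to the end.
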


\begin{proof}
Put $S=S_\pm(p,n)$ and $\ell=\lceil \log p \rceil$. 
We start with finding a collision $(H_1^+,H_1^-)$ among the first
$S$ vectors with common sum $w_1$ using algorithm $\cal B$. 
We continue with the next $S$ input vectors and find 
a
collision $(H_2^+,H_2^-)$ with sum $w_2$, and so on.
We then take the first $S$ subset sums $w_1,\ldots,w_S$ and
find a pair of disjoint subsets $(K^+,K^-)$ of $\{1,\ldots,S\}$,
not both empty, such that $\sum_{i\in K^+}w_i=\sum_{i\in K^-}w_i=w$.
The four subsets $L^{++}=\bigcup_{i\in K^+}H_i^+$,
$L^{+-}=\bigcup_{i\in K^+}H_i^-$, $L^{-+}=\bigcup_{i\in K^-}H_i^+$, and
$L^{--}=\bigcup_{i\in K^-}H_i^-$ of input vectors are pairwise disjoint,
not all empty and have common sum $w$. Iterating this we end up with 
at least $p$ pairwise disjoint subsets (not all empty) with equal
sum. If one of these sets is empty then the common sum is zero
and we can take any of the non-empty subsets. Otherwise the
union of the first $p$ of the subsets has zero sum.
The total number of applications of the collision finding algorithm $\cal B$
is $S^{\ell-1}+\ldots+S+1<S^\ell$.
\end{proof}

Propositions~\ref{prop:signed} and~\ref{prop:coll_doubling},
together with the remark on the case $p=2$
immediately give the following.

\begin{theorem}
\label{thm:zsum}
There is a deterministic algorithm that, given a sequence
of $S(p,n)=p^{O(p\log^2p)}n^{O(p\log p)}$ 
vectors 
from $\Z_p^n$, finds a 
non-trivial zero sum subsequence
in time 
$\poly(S(p,n))$. 
\end{theorem}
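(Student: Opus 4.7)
The plan is to combine the two preceding propositions, handling the even prime separately. The base case $p=2$ is immediate: given $n+1$ vectors in $\Z_2^n$ they must be linearly dependent, and over $\F_2$ any nontrivial linear relation is already a zero sum subsequence (since the only nonzero coefficient is $1$). Gaussian elimination finds such a dependence in time $\poly(n)$, and $n+1 \le S(2,n)$ with the bound claimed.

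For odd $p$, I would first invoke Proposition~\ref{prop:signed} to obtain a deterministic procedure $\cal B$ that, given any $S_{\pm}(p,n) = p^{O(p\log p)} n^{O(p)}$ vectors from $\Z_p^n$, finds a nontrivial signed zero-sum subset in time $\poly(S_{\pm}(p,n))$. As noted just before Proposition~\ref{prop:coll_doubling}, such a signed subset translates directly into a \emph{collision}: a pair of disjoint subsets (not both empty) with equal sum. Then I would feed $\cal B$ into Proposition~\ref{prop:coll_doubling}, which upgrades any collision-finder on $S_{\pm}(p,n)$ vectors into a genuine zero-sum-subsequence finder on $S_{\pm}(p,n)^{\lceil\log p\rceil}$ vectors, by recursively doubling the number of disjoint subsets sharing a common sum until at least $p$ such subsets are produced (at which point either one is already zero, or $p$ of them can be summed to give the zero vector).

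It remains to verify the parameter bound. Setting $S(p,n) := S_{\pm}(p,n)^{\lceil\log p\rceil}$ and substituting
\begin{equation*}
S_{\pm}(p,n) = p^{O(p\log p)} n^{O(p)}
\end{equation*}
yields
\begin{equation*}
S(p,n) = \bigl(p^{O(p\log p)} n^{O(p)}\bigr)^{O(\log p)} = p^{O(p\log^2 p)} n^{O(p\log p)},
\end{equation*}
which matches the bound in the statement. The total running time is polynomial in $S(p,n)$ since Proposition~\ref{prop:coll_doubling} uses at most $S_{\pm}(p,n)^{\lceil\log p\rceil}$ calls to $\cal B$ plus $\poly(S(p,n))$ other operations, and each call to $\cal B$ itself runs in time $\poly(S_{\pm}(p,n)) \le \poly(S(p,n))$.

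There is no real obstacle here; the theorem is essentially a bookkeeping combination of the two earlier propositions plus the trivial $p=2$ remark. The only thing worth being careful about is the distinction between signed subset sums (collisions) and unsigned zero-sum subsequences — Proposition~\ref{prop:signed} only produces the former, and it is precisely the role of Proposition~\ref{prop:coll_doubling} (via the $p$-fold disjoint collision argument) to bridge the gap, which is why the exponent $\lceil \log p\rceil$ appears and why the final bound has an extra $\log p$ factor in the exponents compared to $S_{\pm}(p,n)$.
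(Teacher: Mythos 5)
Your proposal is correct and follows exactly the paper's route: the paper proves this theorem by stating that Propositions~\ref{prop:signed} and~\ref{prop:coll_doubling}, together with the remark on the case $p=2$, immediately give the result, and your parameter check $S_{\pm}(p,n)^{\lceil\log p\rceil}=p^{O(p\log^2p)}n^{O(p\log p)}$ is the same bookkeeping. Nothing is missing.
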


We remark that in \cite{IS17}, an algorithm for a  more general task
is given. This task is finding
a nontrivial representation of the zero vector as a linear
combination of the input vectors with $d$th power coefficients.
This includes our problem as the special case $d=p-1$. The
algorithm of \cite{IS17} for $d=p-1$ would give 
$S(p,n)=p^{O(p^2\log p)}n^{O(p\log p)}$, a parameter 
somewhat worse than that we have in Theorem~\ref{thm:zsum},
though would be still polynomial in $n$ for $p=O(1)$.
The method of \cite{IS17} for finding a collision 
is more complicated than the present one: it 
is based on collecting
relations organized in a $d$-dimensional hypercube
rather than a square. (The method of doubling collisions
is essentially identical with that described here in 
Proposition~\ref{prop:coll_doubling}.)

\section{Concluding remarks}
\label{sec:conclusion}

We have shown that the hidden subgroup problem in
a nilpotent group $G$ of class bounded by a constant
can be solved in polynomial time by an exact quantum
algorithm provided that there is a polynomial
time (that is, time $\poly(n\log p)$)
exact method that finds zero sum subsequences 
in sequences consisting of polynomially many elements of
$\Z_p^n$ for prime divisors $p$ of $\size{G}$. We have such
a method for $p=O(1)$. By Olson's theorem~\cite{Olson}, the 
shortest length for which a not necessarily polynomial time
 zero sum subsequence finding algorithm exists is around $np$. 

We propose the question of existence of a $\poly(np)$-time
algorithm for finding zero sum subsequences from sequences of
length $(np)^d$ for a sufficiently large constant $d$ as a problem
for further research.
A positive answer would imply existence of an exact polynomial
time quantum algorithm for the case when $\size{G}$ is smooth,
that is, the prime factors of $\size{G}$ are of size bounded
by a polynomial in $\log \size{G}$. Even a non-exact method
(e.g., a randomized algorithm) would be of great interest as,
by Proposition~\ref{prop:mod_comm_nonexact},
it would give a new result in the non-exact setting:
existence of an efficient "probabilistic" quantum hidden
subgroup algorithm for nilpotent groups of smooth order
having $O(1)$-bounded nilpotency class. Even somewhat worse
results would potentially lead to quantum hidden subgroup 
algorithms faster than the known ones.

For the purposes of "probabilistic" quantum hidden subgroup
algorithms even a method that finds a zero sum subsequence
"on average", that is for at least a $1/\poly(np)$ proportion of the possible
sequences would be sufficient. However, as the following simple
worst-case to average-case reduction shows, at least in the
randomized setting, the gain cannot
be better than polynomial. Assume that the classical randomized 
algorithm $\cal A$ finds in time
$T=T(p,n)$ with probability at least $\delta$ a subsequence
of a random sequence of length $S=S(p,n)$ of vectors from
$\Z_p^n$. Here, probability is taken for the uniform distribution
of the array of the vectors together with
the random bits of $\cal A$. 
Then we can do the following.
We start with an arbitrary sequence of $\frac{1}{\delta}\cdot S^2$ 
input vectors, we draw 
$\frac{1}{\delta}\cdot S^2$ uniformly random vectors, one for each input vector. 
Then we divide the input sequence into groups of length $S$
and to each input vector we add the corresponding random vector.
Within each group, we apply procedure $\cal A$. As the sums are random
vectors, in each group, procedure $\cal A$ succeeds with probability
at least $\delta$ and, with probability at least $\frac{1}{2}$,
$\cal A$ will succeed in at least $S$ groups. If this is the case then we choose
$S$ "lucky" groups, in each group take the sum of the random vectors corresponding to the
members of the zero sum subsequences. We apply algorithm $\cal A$ for
these $S$ sums. It finds a nontrivial zero sum subsequence with probability
at least $\delta$. Finally, we take the union of the corresponding 
subsequences. This way we obtain a procedure that finds
a nontrivial zero sum subsequence of {\em every} sequence
of length $\frac{1}{\delta}\cdot S^2$ in time  
$\poly(T+\frac{1}{\delta}\cdot ST)$ with probability at least
$\delta/2$.

\bibliographystyle{alpha}
\bibliography{myrefs}

\newcommand{\etalchar}[1]{$^{#1}$}
\begin{thebibliography}{WQT{\etalchar{+}}22}

\bibitem[AMR07]{AlaMooRus}
G.~Alagic, C.~Moore, and A.~Russell.
\newblock Quantum algorithms for simon's problem over general groups.
\newblock In {\em Proceedings of the eighteenth annual ACM-SIAM symposium on
  Discrete algorithms}, pages 1217--1224, 2007.

\bibitem[AR17]{AlaRus}
G.~Alagic and A.~Russell.
\newblock Quantum-secure symmetric-key cryptography based on hidden shifts.
\newblock In {\em Advances in Cryptology--EUROCRYPT 2017: 36th Annual
  International Conference on the Theory and Applications of Cryptographic
  Techniques, Paris, France, April 30--May 4, 2017, Proceedings, Part III},
  pages 65--93. Springer, 2017.

\bibitem[BH97]{BraHoy}
G.~Brassard and P.~H{\o}yer.
\newblock An exact quantum polynomial-time algorithm for {Simon}'s problem.
\newblock In {\em ISTCS 97}, pages 12--23, 1997.

\bibitem[BL21]{BaeLee}
E.~Bae and S.~Lee.
\newblock Quantum algorithm based on the $\varepsilon$-random linear
  disequations for the continuous hidden shift problem.
\newblock {\em Quantum Inf. Process.}, 20(10):347, 2021.

\bibitem[BNP18]{BonNaya}
X.~Bonnetain and M.~Naya-Plasencia.
\newblock Hidden shift quantum cryptanalysis and implications.
\newblock In {\em Advances in Cryptology--ASIACRYPT 2018: 24th International
  Conference on the Theory and Application of Cryptology and Information
  Security, Brisbane, QLD, Australia, December 2--6, 2018, Proceedings, Part I
  24}, pages 560--592. Springer, 2018.

\bibitem[Bon21]{BonnetainSimon}
X.~Bonnetain.
\newblock Tight bounds for {Simon}'s algorithm.
\newblock In {\em LATINCRYPT 2021}, pages 2--23, 2021.

\bibitem[BV97]{BerVaz}
E.~Bernstein and U.~Vazirani.
\newblock Quantum complexity theory.
\newblock {\em SIAM J. Comput.}, 26:1411--1473, 1997.

\bibitem[CM23]{CasVdm}
W.~Castryck and N.~Vander Meeren.
\newblock Two remarks on the vectorization problem.
\newblock In {\em Progress in Cryptology--INDOCRYPT 2022: 23rd International
  Conference on Cryptology in India, Kolkata, India, December 11--14, 2022,
  Proceedings}, pages 658--678. Springer, 2023.

\bibitem[CQ18]{CaiQiu}
G.~Cai and D.~Qiu.
\newblock Optimal separation in exact query complexities for {Simon}'s problem.
\newblock {\em J. Comput. Syst. Sci.}, 97:83--93, 2018.

\bibitem[DHIS14]{DHIS14}
T.~Decker, P.~H{\o}yer, G.~Ivanyos, and M.~Santha.
\newblock Polynomial time quantum algorithms for certain bivariate hidden
  polynomial problems.
\newblock {\em Quantum Inf. Comput.}, 14(9-10):790--806, 2014.

\bibitem[EHK04]{EHK04}
M.~Ettinger, P.~H{\o}yer, and E.~Knill.
\newblock The quantum query complexity of the hidden subgroup problem is
  polynomial.
\newblock {\em Inf. Process. Lett.}, 91:43--48, 2004.

\bibitem[FIM{\etalchar{+}}14]{FIMSS}
K.~Friedl, G.~Ivanyos, F.~Magniez, M.~Santha, and P.~Sen.
\newblock Hidden translation and translating coset in quantum computing.
\newblock {\em SIAM J. Comput.}, 43(1):1--24, 2014.
\newblock Preliminary version in STOC 2003.

\bibitem[HK18]{HorKah}
K.~Horan and D.~Kahrobaei.
\newblock The hidden subgroup problem and post-quantum group-based
  cryptography.
\newblock In {\em Mathematical Software--ICMS 2018: 6th International
  Conference, South Bend, IN, USA, July 24-27, 2018, Proceedings 6}, pages
  218--226. Springer, 2018.

\bibitem[II22]{II22}
M.~Imran and G.~Ivanyos.
\newblock An exact quantum hidden subgroup algorithm and applications to
  solvable groups.
\newblock {\em Quantum Inf. Comput.}, 22(9{\&}10):770--789, 2022.

\bibitem[IS17]{IS17}
G.~Ivanyos and M.~Santha.
\newblock Solving systems of diagonal polynomial equations over finite fields.
\newblock {\em Theor. Comput. Sci.}, 657:73--85, 2017.

\bibitem[ISS12]{ISS12}
G.~Ivanyos, L.~Sanselme, and M.~Santha.
\newblock An efficient quantum algorithm for the hidden subgroup problem in
  nil-2 groups.
\newblock {\em Algorithmica}, 62(1-2):480--498, 2012.

\bibitem[KL90]{KantorLux}
W.~M. Kantor and E.~M. Luks.
\newblock Computing in quotient groups.
\newblock In {\em STOC 1990}, pages 524--534, 1990.

\bibitem[Kup05]{Kuperberg}
G.~Kuperberg.
\newblock A subexponential-time quantum algorithm for the dihedral hidden
  subgroup problem.
\newblock {\em SIAM J. Comput.}, 35(1):170--188, 2005.

\bibitem[LK07a]{LomKaufGrover}
S.~Jr. Lomonaco and L.~H. Kauffman.
\newblock Is grover’s algorithm a quantum hidden subgroup algorithm?
\newblock {\em Quantum Inf. Process.}, 6:461--476, 2007.

\bibitem[LK07b]{LomKaufTool}
S.~Jr. Lomonaco and L.~H. Kauffman.
\newblock Quantum hidden subgroup algorithms: An algorithmic toolkit.
\newblock In {\em Mathematics of Quantum Computation and Quantum Technology},
  pages 21--64. Chapman and Hall/CRC, New York, 2007.

\bibitem[Lom04]{LomontSurvey}
C.~Lomont.
\newblock The hidden subgroup problem-review and open problems.
\newblock Technical Report quant-ph/0411037, arXiv, 2004.

\bibitem[MZ03]{MosZal}
M.~Mosca and C.~Zalka.
\newblock Exact quantum {Fourier} transforms and discrete logarithm algorithms.
\newblock {\em Int. J. Quantum Inf.}, 02:91--100, 2003.

\bibitem[NC10]{NiChu}
M.~A. Nielsen and I.~K. Chuang.
\newblock {\em Quantum Computation and Quantum Information}.
\newblock Cambridge University Press, Cambridge, 10th anniversary edition,
  2010.

\bibitem[NO02]{NishOzaQTM}
H.~Nishimura and M.~Ozawa.
\newblock Computational complexity of uniform quantum circuit families and
  quantum turing machines.
\newblock {\em Theor. Comput. Sci.}, 276:147--181, 2002.

\bibitem[NO05]{NishOzaUnif}
H.~Nishimura and M.~Ozawa.
\newblock Uniformity of quantum circuit families for error-free algorithms.
\newblock {\em Theor. Comput. Sci.}, 332:487--496, 2005.

\bibitem[NO09]{NishOzaPerfect}
H.~Nishimura and M.~Ozawa.
\newblock Perfect computational equivalence between quantum {Turing} machines
  and finitely generated uniform quantum circuit families.
\newblock {\em Quantum Inf. Process.}, 8:13--24, 2009.

\bibitem[Ols69]{Olson}
J.~E. Olson.
\newblock A combinatorial problem on finite abelian groups, i.
\newblock {\em J. Number Theory}, 1(1):8--10, 1969.

\bibitem[Rob95]{Robinson}
D.~Robinson.
\newblock {\em A Course in the Theory of Groups}.
\newblock Springer, Urbana, Illionis, 2nd edition, 1995.

\bibitem[Wan10]{WangSurvey}
F.~Wang.
\newblock The hidden subgroup problem.
\newblock Technical Report quant-ph/1008.0010, arXiv, 2010.

\bibitem[Wat01]{Watrous}
J.~Watrous.
\newblock Quantum algorithms for solvable groups.
\newblock In {\em STOC 2001}, pages 60--67, 2001.

\bibitem[WQT{\etalchar{+}}22]{WuQiuTanLiCai}
Z.~Wu, D.~Qiu, J.~Tan, H.~Li, and G.~Cai.
\newblock Quantum and classical query complexities for generalized {Simon}'s
  problem.
\newblock {\em Theor. Comput. Sci.}, 924:171--186, 2022.

\end{thebibliography}

\end{document}